\documentclass[aps,pra,reprint,amsmath,amssymb,superscriptaddress,floatfix]{revtex4-2}
\usepackage{bm}
\usepackage{graphicx}
\usepackage{booktabs}
\usepackage{mathtools}
\usepackage{microtype}
\usepackage{amsthm}
\usepackage{hyperref}
\newtheorem{proposition}{Proposition}
\newtheorem{lemma}{Lemma}
\newtheorem{conjecture}{Conjecture}
\newtheorem{remark}{Remark}
\graphicspath{{figures/}}
\newcommand{\mv}{\boldsymbol m}
\newcommand{\nop}{\hat n}

\begin{document}

\title{Exact damping-basis continued fractions for few-atom lasers:\\
collective correlations beyond second-order cumulant closure}

\author{Stephan Hartmann}
\affiliation{Munich Center for Mathematical Philosophy, LMU Munich, Germany}

\date{\today}

\begin{abstract}
We combine the Briegel--Englert damping basis of a lossy cavity mode with
the permutation-invariant Liouville basis of \(N\) identical, incoherently
pumped two-level emitters.  The resulting Liouvillian is exactly block
tridiagonal in a single radial field index, with atomic blocks of
dimension \(\binom{N+3}{3}\).  The stationary state follows from a matrix
continued fraction whose asymptotic closure is a linear matrix equation
that we prove to be uniquely solvable throughout the dissipative regime \(A>0\), and the same
construction in the coherence-one sector yields the emission spectrum.
The method is validated against direct diagonalization, weak-pump
perturbation theory, and closed-system spectra, reaches \(N=16\) in
tens of seconds in the few-quanta regime, and has a quantified precision limit at
large photon number.  We use it to derive an exact two-atom photon-balance
identity that isolates the cavity-induced pair coherence and relates it to
a singlet--triplet population difference; to establish an exact pump-order
hierarchy of connected correlations for a zero-temperature cavity and a parity property of the minimal
collective-spin sector; and to benchmark the second-order cumulant closure
used in few-emitter laser theory, which we find to be semi-quantitative
for the photon number but to predict the wrong sign of the inter-emitter
coherence above inversion.  In a good-cavity regime the exact solution
shows near-Poissonian statistics and strong line narrowing for two and
three emitters.
\end{abstract}

\maketitle

\section{Introduction}
\label{sec:intro}

Microscopic lasers test quantum-optical laser theory at its foundations,
because the gain medium can no longer be replaced by a macroscopic
semiclassical polarization.  The one-atom laser and micromaser literature
established that lasing-like behavior, nonclassical photon statistics, and
strong atom--field correlations already occur at the level of a few quanta
\cite{Filipowicz1986,MuSavage1992,An1994,Loffler1997,Kilin2002,McKeever2003},
and few-emitter lasing has since been reached in ultrasmall plasmonic
nanocavities \cite{Ojambati2024}, in steady-state superradiant lasers
with less than one intracavity photon \cite{MeiserHolland2010,Bohnet2012},
and in quantum-dot nanolasers with giant photon bunching
\cite{Jahnke2016}.  When the number of emitters is small, the theoretical
question is not only how to compute, but which approximations remain
trustworthy: the workhorse of few-emitter laser theory is the cluster or
cumulant expansion truncated at second order
\cite{Gies2007,Leymann2014,MeiserHolland2010,KirtonKeeling2018,Ojambati2024},
whose accuracy for a handful of strongly coupled emitters cannot be judged
from within the method.  Exact finite-\(N\) solutions that retain the
collective atomic structure and the full cavity Hilbert space are the
natural benchmark.

Three lines of prior work each supply part of such a solution.  Briegel
and Englert diagonalized the dissipative field and atomic Liouvillians in
\emph{damping bases} before treating their coupling
\cite{BriegelEnglert1993,EnglertMorigi2002}, and Ginzel \emph{et al.}
solved the continuously pumped one-atom laser in this framework
\cite{Ginzel1993}; the construction is exact in the field but was applied
to a single emitter.  Permutation symmetry reduces the Liouville space of
\(N\) identical two-level systems with symmetric local dissipation from
\(4^N\) to \(D_N=\binom{N+3}{3}\) dimensions
\cite{ChaseGeremia2008,Baragiola2010,XuTieriHolland2013,Hartmann2016,GeggRichter2016,GeggRichter2017,Shammah2018,KirtonKeeling2017,KirtonKeeling2018};
these methods handle many emitters exactly but truncate the cavity in the
Fock basis and solve a finite sparse Liouvillian.  Matrix continued
fractions, finally, are the classical tool for tridiagonal hierarchies in
Fokker--Planck and quantum-optical master equations
\cite{Risken1989,RiskenSavageHaakeWalls1987,VogelRisken1988} and continue
to be used for block-tridiagonal Liouvillians \cite{KeliriSchiro2026}.
Two-atom lasers were studied directly by Yeoman and Meyer, who attributed
an improved laser line to a cavity-induced mutual atomic coherence
\cite{YeomanMeyer1998,MeyerYeoman1997,Yeoman1998}, and by others
\cite{GheriHorakRitsch1997,KolobovHaake1997}; Fleischhauer's
\(N\)-atom/one-atom relation \cite{Fleischhauer1994} and the few-emitter
cooperativity study of Auff\`eves \emph{et al.} \cite{Auffeves2011}
provide limiting benchmarks.

The contribution of this paper is the combination of the first two
ingredients, which exposes a structure neither has alone.  Because the
Briegel--Englert multiplication algebra changes the radial field index by
at most one and the permutation-invariant atomic basis is closed under
collective raising and lowering, the coupled Liouvillian of the pumped
Tavis--Cummings model is an exact block-Jacobi operator in one infinite
index, with \(D_N\times D_N\) blocks (Proposition~\ref{prop:blockjacobi}).
The stationary state is then a matrix continued fraction whose asymptotic
closure is a linear matrix equation, uniquely solvable for every parameter
set with \(A>0\) (Proposition~\ref{prop:linear-closure}), and the emission spectrum
follows from the same recurrence in the coherence-one sector.  The
permutation-invariant atomic sector supplies exact collective
correlations; the damping basis supplies analytic access to observables
and to weak-pump structure.  The price is a precision limit at large
photon number, which we quantify.

We use the method for three purposes.  First, we derive exact identities
for collective correlations in two and three emitters: a photon-balance
identity in which the cavity-induced pair coherence appears as a separate
term equal to a singlet--triplet population difference; a proved
pump-order hierarchy of excitation cumulants and a conjectured one for
connected coherences, tested to fourth order; and a parity property of
the minimal collective-spin sector that is sharp in sector-resolved
quantities and invisible in the photon statistics.  Second, we benchmark
the second-order cumulant closure against the exact solution for the same
master equation, up to \(N=16\), and find that it is semi-quantitative
for the photon number but fails qualitatively for the inter-emitter
coherence above inversion.  Third, we follow the same model from the
bad-cavity few-quanta regime, whose spectrum is a collective Rabi
doublet, into a good-cavity regime of genuine few-emitter lasing with
near-Poissonian statistics and line narrowing, where the two-atom line
improvement of Yeoman and Meyer is reproduced exactly.

Section~\ref{sec:model} fixes the model and the two bases,
Sec.~\ref{sec:reduction} states the block-Jacobi reduction, and
Sec.~\ref{sec:cf} develops the continued fraction, its conditioning, the
observable functionals, and the spectrum.  Sections~\ref{sec:collective},
\ref{sec:benchmark}, and \ref{sec:lasing} contain the three applications,
Sec.~\ref{sec:discussion} the comparison with other methods and the
limitations, and the appendices the field coefficients, observable
functionals, validation tests, and maximum-entropy details.  All code and
data are available \cite{code}.

\section{Model and permutation-invariant damping basis}
\label{sec:model}

We consider \(N\) identical two-level atoms coupled symmetrically to one
lossy cavity mode, with \(\tau_+=\lvert e\rangle\langle g\rvert\),
\(\tau_-=\tau_+^\dagger\), \(\sigma_z=\lvert e\rangle\langle e\rvert-\lvert g\rangle\langle g\rvert\),
collective operators \(S_\pm=\sum_j\tau_\pm^{(j)}\), \(S_z=\sum_j\sigma_z^{(j)}\),
photon-number operator \(\nop=a^\dagger a\), and Hamiltonian
\begin{equation}
H=\hbar\omega\nop+\frac{\hbar\Omega}{2}S_z-\frac{\hbar g}{2}(a^\dagger S_-+aS_+),\qquad \Delta=\Omega-\omega .
\label{eq:H}
\end{equation}
With \(\mathcal D[L]\rho=L\rho L^\dagger-\tfrac12\{L^\dagger L,\rho\}\), the
master equation \cite{Lindblad1976,GKS1976} is
\begin{align}
\dot\rho={}&-\frac{i}{\hbar}[H,\rho]+A(1+\nu)\mathcal D[a]\rho
+A\nu\mathcal D[a^\dagger]\rho\notag\\
&+\sum_{j=1}^N\Big[B(1-s)\mathcal D[\tau_-^{(j)}]\rho
+Bs\mathcal D[\tau_+^{(j)}]\rho\notag\\
&\hspace{14mm}+\frac{C-B/2}{2}\mathcal D[\sigma_z^{(j)}]\rho\Big].
\label{eq:master}
\end{align}
Here \(A\) is the cavity energy-decay rate, \(\nu\) its thermal occupation,
\(B\) the longitudinal atomic rate, \(s\in[0,1]\) the pump parameter
(\(Bs\) is the incoherent pump rate, \(B(1-s)\) the decay rate, and \(s\)
the uncoupled excited-state population), and \(C\ge B/2\) the transverse
rate, with \(C=B/2\) meaning no pure dephasing.  Briegel and Englert use
\(\sigma_\pm^{\rm BE}=2\tau_\pm\), so \(g_{\rm BE}=g/2\).  Throughout we
take \(A,B>0\) and \(0<s<1\), so that cavity loss and the local atomic
pump and decay channels are all present.  For every finite Fock
truncation used below the resulting Lindbladian is then irreducible ---
local pump and decay connect all atomic configurations, and the
Tavis--Cummings coupling together with cavity loss connects all Fock
levels --- and hence has a unique stationary state by the criteria of
Refs.~\cite{Spohn1977,Frigerio1978}; the continued-fraction solution
converges to that state as the cutoff is increased.  We work throughout
in this unique stationary branch and do not consider boundary cases such
as \(s=1\) or \(B=0\).

The cavity Liouvillian
\(\mathcal L_{\rm f}=A(1+\nu)\mathcal D[a]+A\nu\mathcal D[a^\dagger]-i\omega[\nop,\cdot]\)
has right eigenoperators \(\rho_n^{(k)}\), \(n\ge0\), \(k\in\mathbb Z\),
\begin{equation}
\mathcal L_{\rm f}\rho_n^{(k)}=\lambda_{n,k}^{\rm f}\rho_n^{(k)},\qquad
\lambda_{n,k}^{\rm f}=-A\Big(n+\frac{|k|}{2}\Big)-i\omega k ,
\label{eq:fieldeigen}
\end{equation}
where \(n\) is the radial damping index and \(k\) the coherence order or
\emph{grade} \cite{BriegelEnglert1993}.  Left or right multiplication by
\(a\) or \(a^\dagger\) changes \(k\) by one and \(n\) by at most one; the
twelve identities are listed in Appendix~\ref{app:fieldcoeff}, and
\(\operatorname{tr}\rho_n^{(k)}=\delta_{n0}\delta_{k0}\).

For one atom the damping eigenoperators of the local dissipator in
Eq.~(\ref{eq:master}) are \(r_0=(1-s)|g\rangle\langle g|+s|e\rangle\langle e|\),
\(r_z=\sigma_z\), \(r_\pm=\tau_\pm\), with eigenvalues \(0,-B,-C,-C\).  A
permutation-invariant \(N\)-atom operator is labelled by the composition
\(\mv=(m_0,m_z,m_+,m_-)\), \(m_0+m_z+m_++m_-=N\), and defined as
\begin{align}
R_{\mv}&=\frac{1}{\mathcal N(\mv)}
\sum_{\pi}^{\rm distinct}\pi\big[r_0^{\otimes m_0}\otimes r_z^{\otimes m_z}
\otimes r_+^{\otimes m_+}\otimes r_-^{\otimes m_-}\big],\notag\\
\mathcal N(\mv)&=\frac{N!}{m_0!\,m_z!\,m_+!\,m_-!},
\label{eq:Rn-explicit}
\end{align}
the \emph{average} over the \(\mathcal N(\mv)\) distinct arrangements.
This is the generalized-Dicke Liouville basis of Ref.~\cite{Hartmann2016}
written in the one-atom damping eigenbasis; it is equivalent to the
\(SU(4)\) construction of Ref.~\cite{XuTieriHolland2013} and to the
symmetric bases of Refs.~\cite{ChaseGeremia2008,GeggRichter2016,Shammah2018}.
Its dimension is
\begin{equation}
D_N=\binom{N+3}{3},\quad D_1=4,\ D_2=10,\ D_3=20,\ D_{16}=969 .
\label{eq:DN}
\end{equation}
Permutation invariance of the density operator should not be confused
with restriction to the maximal-spin Hilbert subspace: the invariant
Liouville space contains the populations and coherences of \emph{all}
total-spin sectors, including the dark and subradiant ones discussed in
Sec.~\ref{sec:sectors}.  Each \(R_{\mv}\) is an eigenoperator of the
uncoupled atomic Liouvillian with eigenvalue
\(-Bm_z-C(m_++m_-)-i\Omega q(\mv)\), where \(q(\mv)=m_+-m_-\) is the
atomic coherence charge, and \(\operatorname{tr}R_{\mv}=\delta_{\mv,(N,0,0,0)}\).
Because \(S_\pm\) are themselves \(S_N\)-invariant, left and right
multiplication by \(S_\pm\) closes on this basis:
\begin{align}
S_+R_{\mv}={}&(1-s)m_0R_{m_0-1,m_z,m_++1,m_-}\notag\\&-m_zR_{m_0,m_z-1,m_++1,m_-}\notag\\
&+m_-R_{m_0+1,m_z,m_+,m_--1}\notag\\&+(1-s)m_-R_{m_0,m_z+1,m_+,m_--1},\label{eq:SLplus}\\
S_-R_{\mv}={}&sm_0R_{m_0-1,m_z,m_+,m_-+1}\notag\\&+m_zR_{m_0,m_z-1,m_+,m_-+1}\notag\\
&+m_+R_{m_0+1,m_z,m_+-1,m_-}\notag\\&-sm_+R_{m_0,m_z+1,m_+-1,m_-},\label{eq:SLminus}\\
R_{\mv}S_+={}&sm_0R_{m_0-1,m_z,m_++1,m_-}\notag\\&+m_zR_{m_0,m_z-1,m_++1,m_-}\notag\\
&+m_-R_{m_0+1,m_z,m_+,m_--1}\notag\\&-sm_-R_{m_0,m_z+1,m_+,m_--1},\label{eq:SRplus}\\
R_{\mv}S_-={}&(1-s)m_0R_{m_0-1,m_z,m_+,m_-+1}\notag\\&-m_zR_{m_0,m_z-1,m_+,m_-+1}\notag\\
&+m_+R_{m_0+1,m_z,m_+-1,m_-}\notag\\&+(1-s)m_+R_{m_0,m_z+1,m_+-1,m_-},\label{eq:SRminus}
\end{align}
where terms with a negative occupation are absent.  These follow from
\(\tau_+r_0=(1-s)\tau_+\), \(\tau_+\sigma_z=-\tau_+\),
\(\tau_+\tau_-=r_0+(1-s)r_z\), \(\tau_-\tau_+=r_0-sr_z\), their
right-multiplication counterparts, and the combinatorial factor
\(\mathcal N(\mv')/\mathcal N(\mv)\) produced by the averaging in
Eq.~(\ref{eq:Rn-explicit}).  They are checked coefficient by coefficient
against explicit tensor products for \(N\le5\) in the test suite of
Ref.~\cite{code} (Appendix~\ref{app:validation}).

\section{Exact block-Jacobi reduction}
\label{sec:reduction}

A combined basis element is \(\rho_n^{(k)}\otimes R_{\mv}\), and the
total coherence
\begin{equation}
K=k+q(\mv)
\label{eq:K}
\end{equation}
is conserved by the interaction, which pairs \(a^{(\dagger)}\) with
\(S_\mp\).  Fix an ordering \(\{\mv_\alpha\}_{\alpha=1}^{D_N}\), set
\(q_\alpha=q(\mv_\alpha)\), \(k_\alpha=K-q_\alpha\), and
\(\Phi_{n\alpha}^{(K)}=\rho_n^{(k_\alpha)}\otimes R_{\mv_\alpha}\); any
operator in the fixed-\(K\) sector is
\(\rho^{(K)}=\sum_{n\ge0}\sum_\alpha(X_n)_\alpha\Phi_{n\alpha}^{(K)}\) with
\(X_n\in\mathbb C^{D_N}\).

Let \(A_\mu^{\rm L},A_\mu^{\rm R}\) (\(\mu=\pm\)) be the matrices of
Eqs.~(\ref{eq:SLplus})--(\ref{eq:SRminus}),
\(S_\mu R_{\mv_\beta}=\sum_\alpha(A_\mu^{\rm L})_{\alpha\beta}R_{\mv_\alpha}\),
\(R_{\mv_\beta}S_\mu=\sum_\alpha(A_\mu^{\rm R})_{\alpha\beta}R_{\mv_\alpha}\),
and \(f^{\rm L/R}_{\mu,\delta}(n,k)\) the field coefficients
\begin{align}
a_\mu\rho_n^{(k)}&=\sum_{\delta=-1}^{1}f_{\mu,\delta}^{\rm L}(n,k)\rho_{n+\delta}^{(k+\mu)},\notag\\
\rho_n^{(k)}a_\mu&=\sum_{\delta=-1}^{1}f_{\mu,\delta}^{\rm R}(n,k)\rho_{n+\delta}^{(k+\mu)},
\label{eq:field-action-coeff}
\end{align}
with \(a_+=a^\dagger\), \(a_-=a\) (Appendix~\ref{app:fieldcoeff}).  The
uncoupled part is diagonal,
\begin{align}
d_{n\beta}^{(K)}=-A\Big(n+\frac{|k_\beta|}{2}\Big)-Bm_{z,\beta}-C(m_{+,\beta}+m_{-,\beta})\notag\\
\hspace{18mm}-i\omega K-i\Delta q_\beta ,
\label{eq:uncoupled-diagonal}
\end{align}
where \(-i\omega k_\beta-i\Omega q_\beta=-i\omega K-i\Delta q_\beta\): in
the stationary sector \(K=0\) the frame frequency drops out, and for
\(K\neq0\) it is a uniform shift.  The interaction
\(\mathcal L_{\rm int}\rho=\tfrac{ig}{2}[a^\dagger S_-+aS_+,\rho]\)
contributes the source-to-target radial blocks
\begin{align}
(\mathcal V_n^{(\delta)})_{\alpha\beta}=\frac{ig}{2}\Big[&
 f_{+,\delta}^{\rm L}(n,k_\beta)(A_-^{\rm L})_{\alpha\beta}
+f_{-,\delta}^{\rm L}(n,k_\beta)(A_+^{\rm L})_{\alpha\beta}\notag\\
&-f_{+,\delta}^{\rm R}(n,k_\beta)(A_-^{\rm R})_{\alpha\beta}
-f_{-,\delta}^{\rm R}(n,k_\beta)(A_+^{\rm R})_{\alpha\beta}\Big],
\label{eq:Vdelta}
\end{align}
and with \(D_n=\operatorname{diag}(d_{n\beta}^{(K)})\),
\begin{equation}
M_n=D_n+\mathcal V_n^{(0)},\quad G_n=\mathcal V_{n+1}^{(-1)},\quad
F_n=\begin{cases}0,&n=0,\\ \mathcal V_{n-1}^{(+1)},&n\ge1,\end{cases}
\label{eq:MGF}
\end{equation}
\(G_n\) collecting transitions from source level \(n+1\) down to \(n\) and
\(F_n\) from \(n-1\) up to \(n\).

\begin{proposition}[Exact block-Jacobi reduction]
\label{prop:blockjacobi}
For every finite \(N\) and fixed total coherence \(K\), the Liouvillian of
Eq.~(\ref{eq:master}) restricted to the permutation-invariant sector
spanned by \(\{\Phi_{n\alpha}^{(K)}\}\) is exactly block tridiagonal in the
radial index,
\begin{equation}
\dot X_n=M_nX_n+G_nX_{n+1}+F_nX_{n-1},\qquad X_{-1}=0,
\label{eq:blockjacobi}
\end{equation}
with \(M_n,G_n,F_n\in\mathbb C^{D_N\times D_N}\) given by Eq.~(\ref{eq:MGF}).
\end{proposition}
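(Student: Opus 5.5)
The plan is to split the Liouvillian of Eq.~(\ref{eq:master}) as \(\mathcal L=\mathcal L_{\rm f}\otimes 1+1\otimes\mathcal L_{\rm a}+\mathcal L_{\rm int}\). We then evaluate each piece on a basis element \(\Phi_{n\beta}^{(K)}=\rho_n^{(k_\beta)}\otimes R_{\mv_\beta}\) and read off the coefficient of \(\Phi_{n'\alpha}^{(K)}\). Since the coefficient vectors obey \(\dot X_{n'}=\sum_n [\mathcal L]_{n'n}X_n\), block tridiagonality amounts to showing that \([\mathcal L]_{n'n}\) vanishes unless \(|n'-n|\le1\), and identifying the three surviving blocks with \(M_n,G_n,F_n\). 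Two further points have to be checked along the way: the image stays in the span of the \(\Phi^{(K)}\), and the permutation-invariant sector is invariant. Here \(\mathcal L_{\rm a}\) is the sum of the local atomic dissipators together with \(-i\frac{\Omega}{2}[S_z,\cdot]\).

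First, the uncoupled part is handled directly. By Eq.~(\ref{eq:fieldeigen}), \(\rho_n^{(k)}\) is a right eigenoperator of \(\mathcal L_{\rm f}\). The stated eigenvalue property of \(R_{\mv}\) makes it an eigenoperator of \(\mathcal L_{\rm a}\), since the local dissipators act as a sum of commuting one-site superoperators on a symmetrized product of one-site eigenoperators. Hence \(\Phi_{n\beta}^{(K)}\) is an eigenoperator of the uncoupled Liouvillian with eigenvalue \(\lambda^{\rm f}_{n,k_\beta}-Bm_{z,\beta}-C(m_{+,\beta}+m_{-,\beta})-i\Omega q_\beta\). Substituting \(k_\beta=K-q_\beta\) turns this into exactly \(d_{n\beta}^{(K)}\) of Eq.~(\ref{eq:uncoupled-diagonal}). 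This gives the diagonal \(D_n\) and contributes nothing off the radial diagonal.

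Second, for the interaction we write \(\mathcal L_{\rm int}\rho=\frac{ig}{2}\big(a_+S_-\rho+a_-S_+\rho-\rho a_+S_--\rho a_-S_+\big)\). Field and atomic operators act on different tensor factors, so each term factorizes. For example, \(a_+S_-(\rho_n^{(k)}\otimes R_{\mv_\beta})=(a_+\rho_n^{(k)})\otimes(S_-R_{\mv_\beta})\), and we expand the field factor by Eq.~(\ref{eq:field-action-coeff}) and the atomic factor by Eqs.~(\ref{eq:SLplus})--(\ref{eq:SRminus}). The grade bookkeeping then follows. \(a_\pm\) shifts \(k\) by \(\pm1\), and each term on the right of \(S_\mp R_{\mv}\) (or \(R_{\mv}S_\mp\)) has \(q\) shifted by \(\mp1\): one checks that \(S_+\) either raises \(m_+\) or lowers \(m_-\), and similarly for the others. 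Consequently every resulting term is \(\rho_{n+\delta}^{(k_\alpha)}\otimes R_{\mv_\alpha}\) with \(k_\alpha=K-q_\alpha\), i.e.\ a \(\Phi_{n+\delta,\alpha}^{(K)}\) with \(\delta\in\{-1,0,1\}\). Collecting coefficients gives precisely \(\mathcal V_n^{(\delta)}\) of Eq.~(\ref{eq:Vdelta}) as the map from radial level \(n\) to level \(n+\delta\). Relabelling target levels then gives the blocks: \(\delta=0\) contributes to \(M_n\); the block from source \(n+1\) to target \(n\) is \(\mathcal V_{n+1}^{(-1)}=G_n\); and the block from source \(n-1\) to target \(n\) is \(\mathcal V_{n-1}^{(+1)}=F_n\).

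The main obstacle is the boundary at \(n=0\), where \(X_{-1}=0\) and \(F_0=0\) must hold. We also need to ensure that no \(\rho_{-1}^{(k)}\) is ever generated. For this we will inspect the field identities of Appendix~\ref{app:fieldcoeff} and verify that \(f^{\rm L/R}_{\mu,-1}(0,k)=0\) for all \(k,\mu\), so lowering from \(n=0\) is absent. A related subtlety arises when \(k\) crosses zero, because \(|k|\) changes the radial meaning and a coefficient may a priori shift \(n\) differently on the two sides. Here we rely on the appendix's statement that multiplication changes \(n\) by at most one for all signs of \(k\). Finally, terms with negative occupations drop out of Eqs.~(\ref{eq:SLplus})--(\ref{eq:SRminus}), and \(S_\pm\) commute with permutations, so the image is again a combination of \(R_{\mv}\) with \(|\mv|=N\). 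This gives closure of the finite \(D_N\)-dimensional atomic index and completes Eq.~(\ref{eq:blockjacobi}).
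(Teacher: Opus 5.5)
Your proposal is correct and follows the same route as the paper's proof. Both rest on three facts: the uncoupled part is diagonal in the product basis, the interaction conserves \(K=k+q\), and the twelve field identities shift \(n\) by at most one. Your version additionally writes out the bookkeeping that produces \(D_n\) and \(\mathcal V_n^{(\delta)}\) and relabels them into \(M_n,G_n,F_n\).

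One small correction to your boundary step: the \(\delta=-1\) coefficients at \(n=0\) do not vanish, since they are the constant \(1+\nu\). The boundary terms drop out instead because the identities of Appendix~\ref{app:fieldcoeff} hold with the convention \(\rho_{-1}^{(k)}=0\), so \(F_0=0\) and \(X_{-1}=0\) need no further check.
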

\begin{proof}
The uncoupled Liouvillian is diagonal in the product basis by
Eq.~(\ref{eq:fieldeigen}) and the eigenoperator property of \(R_{\mv}\);
the interaction conserves \(K\) because it raises \(k\) by one exactly
when it lowers \(q\) by one and vice versa; and the twelve field
identities of Appendix~\ref{app:fieldcoeff} change \(n\) by
\(\delta\in\{-1,0,1\}\) only.  Hence the coefficient of
\(\Phi_{n\alpha}^{(K)}\) in \(\mathcal L\Phi_{m\beta}^{(K)}\) vanishes unless
\(n\in\{m-1,m,m+1\}\).  The block dimension is the number of compositions
of \(N\) into four parts, independent of \(n\) and \(K\).
\end{proof}

For \(N=1\) the ordered basis \((r_0,r_z,r_+,r_-)\) reproduces the
Briegel--Englert--Ginzel one-atom recurrence; for any \(N\) the same
formulas produce the \(D_N\times D_N\) blocks with no case-specific
algebra.  The conceptual point is that permutation symmetry keeps the
many-atom sector finite without destroying the nearest-neighbor radial
structure of the one-atom problem.

\section{Continued-fraction solution, observables, spectrum}
\label{sec:cf}

\subsection{Stationary recurrence and linear closure}
\label{sec:closure}

The global \(U(1)\) covariance of Eq.~(\ref{eq:master}) and the uniqueness
of the stationary state (Sec.~\ref{sec:model}) place it in the sector \(K=0\).  Suppressing the
sector label,
\begin{equation}
M_nX_n+G_nX_{n+1}+F_nX_{n-1}=0 .
\label{eq:stationary}
\end{equation}
With transfer matrices \(X_{n+1}=R_nX_n\), the stationary equation at
\(n+1\) gives
\begin{equation}
\big(M_{n+1}+G_{n+1}R_{n+1}\big)R_n=-F_{n+1},
\label{eq:riccati}
\end{equation}
so that downward iteration from a terminal \(R_{n_{\max}}\) determines
\(R_0\), \((M_0+G_0R_0)X_0=0\) determines \(X_0\) up to a constant, and
\(X_{n+1}=R_nX_n\) reconstructs the state.  Normalization is elementary:
since \(\operatorname{tr}\rho_n^{(k)}=\delta_{n0}\delta_{k0}\) and
\(\operatorname{tr}R_{\mv}=\delta_{\mv,(N,0,0,0)}\), the trace of
\(\rho^{(0)}\) is the single component of \(X_0\) along
\(\Phi_{0,(N,0,0,0)}\), which is set to one.

In the bulk the blocks are affine in \(n\),
\(M_n=nM^{(1)}+M^{(0)}\), \(G_n=nG^{(1)}+G^{(0)}\), \(F_n=nF^{(1)}+F^{(0)}\),
so a finite limit \(R_\infty=\lim_nR_n\), if it exists, satisfies the
algebraic Riccati equation
\begin{equation}
\big(M^{(1)}+G^{(1)}R_\infty\big)R_\infty+F^{(1)}=0 .
\label{eq:Rinf}
\end{equation}
Selecting a root of such a quadratic matrix equation is the usual
difficulty in terminating a matrix continued fraction
\cite{Risken1989}.  Here the equation is linear, and solvable throughout the dissipative
regime \(A>0\).

\begin{proposition}[Linear closure, solvable for \(A>0\)]
\label{prop:linear-closure}
(i) \(G_n=G^{(0)}\) for all \(n\ge0\), so \(G^{(1)}=0\) and
Eq.~(\ref{eq:Rinf}) reduces to \(M^{(1)}R_\infty=-F^{(1)}\).
(ii) In every sector \(K\),
\begin{align}
M^{(1)}&=-A\,\mathbb{I}+\frac{ig}{2}\,\mathcal N_K,\notag\\
\mathcal N_K&=[S_+,\,\cdot\,]\,\mathcal P_{q<K}+[S_-,\,\cdot\,]\,\mathcal P_{q>K},
\label{eq:M1-structure}
\end{align}
where \(\mathcal P_{q\lessgtr K}\) project onto the atomic basis states with charge
below or above \(K\) and \([S_\pm,\cdot]\) denotes the commutator action
\(A_\pm^{\rm L}-A_\pm^{\rm R}\), while \(\mathbb I\) is the identity on the
sector.  The operator \(\mathcal N_K\) is
nilpotent, \(\mathcal N_K^{\,2N+1}=0\), so for every \(A>0\) and all
values of \(g,s,\Delta,B,C,\nu\), \(M^{(1)}\) is invertible and
\begin{equation}
R_\infty=-\big(M^{(1)}\big)^{-1}F^{(1)}
=\frac1A\sum_{j=0}^{2N}\Big(\frac{ig}{2A}\Big)^{j}\mathcal N_K^{\,j}\,F^{(1)} .
\label{eq:Rinf-linear}
\end{equation}
\end{proposition}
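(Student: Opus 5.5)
The plan is to read both parts directly off the $n$-dependence of the field coefficients $f^{\rm L/R}_{\mu,\delta}(n,k)$ in Eq.~(\ref{eq:field-action-coeff}), which enter the blocks only through Eq.~(\ref{eq:Vdelta}); the atomic matrices $A^{\rm L/R}_\pm$ do not depend on $n$. The nilpotency and the Neumann series in (ii) then follow from charge bookkeeping alone. I have not re-derived the twelve identities of Appendix~\ref{app:fieldcoeff}, so the shape of the $n$-dependence assumed below (lowering coefficients constant, same-level coefficients affine with the sign pattern described) is a hypothesis to be checked there, not an established fact.

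\emph{Step 1 (part (i)).} By Eq.~(\ref{eq:MGF}), $G_n=\mathcal V^{(-1)}_{n+1}$ is built from $f^{\rm L/R}_{\pm,-1}(n+1,k_\beta)$ only. I would list the $\delta=-1$ entries of the twelve identities and verify that none depends on $n$; they should depend on $k$ (through its sign) and on $\nu$ only. This is the known Briegel--Englert feature that the downward radial step in $a\rho_n^{(k)}$, $\rho_n^{(k)}a^\dagger$, etc.\ carries an $n$-independent prefactor. It gives $G_n=G^{(0)}$ at once, so $G^{(1)}=0$, and Eq.~(\ref{eq:Rinf}) collapses to the linear equation $M^{(1)}R_\infty=-F^{(1)}$.

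\emph{Step 2 (structure of $M^{(1)}$).} $M_n=D_n+\mathcal V^{(0)}_n$. By Eq.~(\ref{eq:uncoupled-diagonal}), the diagonal part contributes exactly $-A\,\mathbb I$ to the coefficient of $n$. For $\mathcal V^{(0)}_n$ I would isolate the $n$-linear part of each $\delta=0$ coefficient. The expected pattern has three parts:
\begin{itemize}
\item The linear parts of $f^{\rm L}_{\mu,0}$ and $f^{\rm R}_{\mu,0}$ coincide, so left and right actions combine into the commutator $A^{\rm L}-A^{\rm R}=[S,\cdot\,]$.
\item A linear part occurs only when the field factor moves the grade $k$ toward zero. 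Since $a$ is paired with $S_+$ and $a^\dagger$ with $S_-$, this means $S_+$ acting on atomic states with $k_\beta=K-q_\beta>0$ (i.e.\ $q<K$), and $S_-$ on states with $q>K$.
\item For $k_\beta=0$ both parts vanish, which accounts for the strict inequalities in $\mathcal P_{q\lessgtr K}$.
\end{itemize}
Normalizing the slope (with $\sigma^{\rm BE}_\pm=2\tau_\pm$ and the prefactor $ig/2$ of Eq.~(\ref{eq:Vdelta})) then yields Eq.~(\ref{eq:M1-structure}).

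I expect this case analysis over the sign of $k$ to be the main obstacle. The first thing I would check is $K=0$, $N=1$ against the Ginzel one-atom recurrence, and then use the $|k|$-dependence of Eq.~(\ref{eq:fieldeigen}) to explain why only grade-decreasing moves carry $n$.

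\emph{Step 3 (nilpotency and inversion).} The map $[S_+,\cdot\,]$ sends charge $q$ to $q+1$, and $[S_-,\cdot\,]$ sends $q$ to $q-1$. The operator $\mathcal N_K$ applies the first only for $q<K$ and the second only for $q>K$, so every nonzero application moves $q$ strictly toward $K$, by one unit, and never across it. Once $q=K$ both projectors annihilate the state. Since $q\in\{-N,\dots,N\}$ takes $2N+1$ values and a chain is strictly monotone in $q$, any nonzero chain $\mathcal N_K^{\,j}x$ has $j\le 2N$. This holds even when $|K|>N$, because then $S_\pm$ push $q$ out of range and give zero. Hence $\mathcal N_K^{\,2N+1}=0$.

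Writing $M^{(1)}=-A\bigl(\mathbb I-\tfrac{ig}{2A}\mathcal N_K\bigr)$ with $A>0$, the finite geometric series inverts the bracket exactly. This gives invertibility for all $g,s,\Delta,B,C,\nu$ and the explicit formula~(\ref{eq:Rinf-linear}).
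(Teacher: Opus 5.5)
Your proposal is correct and follows the paper's proof essentially step for step. Part (i) rests on the constancy of the $\delta=-1$ coefficients. Part (ii) rests on the unit-slope, $\nu$-independent $\delta=0$ coefficients ($n+k$ for $a$ at $k>0$, $n+|k|$ for $a^\dagger$ at $k<0$, identical on left and right), which combine into $-A\,\mathbb I+\frac{ig}{2}\mathcal N_K$, and you use the same charge-monotonicity argument for nilpotency and the finite Neumann series.

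The structural hypothesis you flag is confirmed directly by Eqs.~(\ref{eq:fa1})--(\ref{eq:fc4}), where every $\delta=-1$ coefficient is the constant $1+\nu$. No $\sigma^{\rm BE}$ rescaling is needed, because Eq.~(\ref{eq:Vdelta}) already carries the prefactor $ig/2$ with $S_\pm=\sum_j\tau_\pm^{(j)}$.
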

\begin{proof}
(i) \(G_n=\mathcal V_{n+1}^{(-1)}\) depends on \(n\) only through the
\(\delta=-1\) field coefficients, and every \(\delta=-1\) coefficient in
Eqs.~(\ref{eq:fa1})--(\ref{eq:fc4}) is the constant \(1+\nu\) or \(\nu\).
(ii) The \(n\)-linear part of \(M_n\) comes from \(-An\) in
Eq.~(\ref{eq:uncoupled-diagonal}) and from the four \(\delta=0\)
coefficients that grow with \(n\): \(f^{\rm L}_{-,0}=f^{\rm R}_{-,0}=n+k\)
for \(k>0\) and \(f^{\rm L}_{+,0}=f^{\rm R}_{+,0}=n+|k|\) for \(k<0\), each
with unit coefficient of \(n\) and independent of \(\nu\); the \(k=0\)
coefficients are \(n\)-independent.  Inserting these into
Eq.~(\ref{eq:Vdelta}) and using \(k_\beta=K-q_\beta\) gives
Eq.~(\ref{eq:M1-structure}): for \(q_\beta<K\) the source couples through
\(a\), hence through \(S_+\) acting from the left minus from the right,
and for \(q_\beta>K\) through \(a^\dagger\), hence \([S_-,\cdot]\).
\([S_+,\cdot]\) raises the charge of every basis state by one and
\([S_-,\cdot]\) lowers it, so \(\mathcal N_K\) maps charge \(q<K\) to
\(q+1\), charge \(q>K\) to \(q-1\), and annihilates charge \(q=K\): each
application moves every component strictly closer to \(q=K\), where it is
killed, and since \(|q|\le N\) the operator vanishes after at most \(2N+1\)
applications.  A nilpotent perturbation of \(-A\mathbb 1\) is invertible
with the finite Neumann series of Eq.~(\ref{eq:Rinf-linear}).
\end{proof}

We have checked Eq.~(\ref{eq:M1-structure}) and the nilpotency
numerically for \(N\le6\) and \(K\in\{0,\pm1,2\}\) (agreement to
\(10^{-14}\), all eigenvalues of \(\mathcal N_K\) zero).  The inverse is
bounded by a polynomial in \(g/2A\) of degree \(2N\); accordingly the
smallest singular value of \(M^{(1)}\) decreases with coupling and atom
number (Table~\ref{tab:closure}: from 0.55 at \(g/A=0.3\) to 0.011 at
\(g/A=2.4\) for \(N=3\), and to \(1.4\times10^{-4}\) at \(g/A=4\), \(N=4\))
but never vanishes.

\begin{remark}[Asymptotic branch selection]
\label{rem:minimal}
Proposition~\ref{prop:linear-closure} identifies the unique bounded
asymptotic transfer matrix.  That the downward recursion started from it
returns the normalizable (minimal) solution of the three-term recurrence
rather than a dominant one is a separate question, which we do not settle
by theorem.  The dominant balance of Eq.~(\ref{eq:riccati}) at large
\(n\), \((M^{(1)}+n^{-1}G^{(0)}R_{n+1})R_n\simeq-F^{(1)}\), admits the
bounded branch \(R_n=O(1)\) and, where the relevant inverses exist, a
branch with \(R_n=O(n)\) in which \(G^{(0)}R_{n+1}R_n\) balances
\(nM^{(1)}R_n\); a bounded terminal condition is therefore the natural
candidate for the minimal solution, in the spirit of the scalar
Pincherle theorem.  What we can assert is that in every case tested ---
weak and strong coupling, weak and strong pump, detuning, thermal
occupation, \(N\le16\) --- the recursion from \(R_\infty\) is numerically
stable, independent of \(n_{\max}\), identical to the result of truncating
the block-Jacobi system with \(X_{n_{\max}+1}=0\) (Miller's algorithm),
and in agreement with brute-force diagonalization to
\(10^{-13}\)--\(10^{-16}\) (Appendix~\ref{app:validation}).
\end{remark}

In practice we evaluate \(M^{(1)},F^{(1)}\) by finite differences at a bulk
index, set \(R_{n_{\max}}=R_\infty\), and recurse down.

\subsection{Cost and conditioning}
\label{sec:precision}

The cost of the stationary solve is \(O(n_{\max}D_N^3)\), one dense
\(D_N\times D_N\) solve per radial index, i.e.\ \(O(N^9)\) for the dense
blocks used here; exploiting the sparsity of the blocks would reduce this.
In the baseline regime of Sec.~\ref{sec:collective} the solve takes
about 0.02~s at \(N=3\), 0.5~s at \(N=8\), and 25--40~s at \(N=16\)
(\(D_{16}=969\), \(n_{\max}=50\)) on a single laptop core, against a
full Liouvillian of dimension \(4^Nn_{\rm cav}^2\) that is out of reach of
exact diagonalization beyond \(N\approx5\).  The radial truncation must exceed
the support of the state, \(n_{\max}\gtrsim\langle\nop\rangle+\text{few}\sqrt{\langle\nop\rangle}\),
since at \(\nu=0\) the coefficients are the normally ordered factorial
moments, \(c_n=\langle a^{\dagger n}a^n\rangle/n!\), which for a
near-Poissonian field of mean \(\mu\) are \(\mu^n/n!\), peaking near
\(n=\mu\) at \(\approx e^{\mu}/\sqrt{2\pi\mu}\).

That growth is also a conditioning limit of the \(\nu=0\) damping basis:
a state with \(\mu\) photons is represented by coefficients spanning a
factor \(\sim e^{\mu}\) (about \(\mu/\ln10\) decimal orders) relative to
the normalization component, which suggests an intrinsic
double-precision conditioning scale of order \(\epsilon\,e^{\mu}\),
\(\epsilon\approx10^{-16}\).  Table~\ref{tab:precision} is consistent with
this estimate in the good-cavity regime of Sec.~\ref{sec:lasing}: 13 stable
digits at \(\mu=6\), 9 at \(\mu=14\), 4 at \(\mu=22\), and failure near
\(\mu=30\), with \(g^{(2)}(0)=2c_2/c_1^2\) no less stable than
\(\langle\nop\rangle\); in particular every \(g^{(2)}(0)\) digit quoted in
Sec.~\ref{sec:lasing} for \(N\le5\) is unchanged under the same truncation
test.  A direct sparse solve of the truncated block-Jacobi system,
in place of the downward recursion, does not improve this, so the loss is
a property of the representation rather than of the algorithm.  In double
precision the practical high-accuracy regime is therefore roughly
\(\langle\nop\rangle\lesssim20\), the precise limit depending on the
observable and tolerance --- the few-quanta regime in which microlasers
operate --- and Fock-truncated permutation-invariant solvers remain the
tool of choice at larger photon numbers.

\begin{table*}[tbp]
\caption{Conditioning of the \(\nu=0\) damping-basis continued fraction at
large photon number.  Good-cavity parameters \(A=0.1\), \(B=1\), \(C=B/2\),
\(g=0.5\), \(s=0.9\).  \(\max_n|c_n|\) is the largest radial coefficient of
the trace-normalized state; the spread is the variation of
\(\langle\nop\rangle\) over \(n_{\max}\in\{60,90,120,160\}\).  The \(N=2,3\)
values agree with brute-force diagonalization (cavity cutoff 40 and 55) to
all digits shown; the last two columns are the variations of
\(\langle\nop\rangle\) and of \(g^{(2)}(0)\) over the same set of
truncations.}
\label{tab:precision}
\begin{ruledtabular}
\begin{tabular}{ccccccc}
\(N\) & \(\langle\nop\rangle\) & \(g^{(2)}(0)\) & \(\max_n|c_n|\) & \(\epsilon\max|c_n|\) & spread of \(\langle\nop\rangle\) & spread of \(g^{(2)}(0)\)\\
\hline
2 & 6.0907507 & 1.0375577 & \(9.5\times10^{1}\) & \(10^{-14}\) & \(10^{-13}\) & \(2\times10^{-14}\)\\
3 & 10.031203 & 1.0211641 & \(4.5\times10^{3}\) & \(10^{-12}\) & \(3\times10^{-11}\) & \(6\times10^{-12}\)\\
4 & 14.011935 & 1.0143123 & \(2.4\times10^{5}\) & \(10^{-11}\) & \(10^{-8}\) & \(10^{-9}\)\\
5 & 18.00157 & 1.010826 & \(1.4\times10^{7}\) & \(10^{-9}\) & \(10^{-5}\) & \(6\times10^{-7}\)\\
6 & 22.00 & 1.0087 & \(8.2\times10^{8}\) & \(10^{-7}\) & \(10^{-2}\) & \(5\times10^{-4}\)\\
\end{tabular}
\end{ruledtabular}
\end{table*}

\subsection{Observables from \(X_0\)}
\label{sec:observables}

Because \(\operatorname{tr}R_{\mv}\) vanishes unless \(\mv=(N,0,0,0)\)
and \(\operatorname{tr}\rho_n^{(k)}=\delta_{n0}\delta_{k0}\), tracing the
stationary state over the field keeps only the radial-index-zero,
charge-neutral components:
\begin{equation}
\rho_N=\operatorname{tr}_{\rm field}\rho
=\sum_{\alpha:\,q_\alpha=0}(X_0)_\alpha R_{\mv_\alpha}.
\label{eq:rhoN-reconstruction}
\end{equation}
Every atomic observable then follows as an ordinary trace.  Joint
observables \(\langle a^\dagger O\rangle\) with \(O\) atomic of charge
\(-1\) (such as \(a^\dagger S_-\) or \(a^\dagger S_-P_J\)) follow in the
same way, since \(\operatorname{tr}(\rho_n^{(-1)}a^\dagger)=\delta_{n0}\)
by Eq.~(\ref{eq:fb4}):
\begin{equation}
\langle a^\dagger O\rangle=\sum_{\alpha:\,q_\alpha=+1}(X_0)_\alpha
\operatorname{tr}\big(R_{\mv_\alpha}O\big).
\label{eq:joint-observable}
\end{equation}
Photon-number moments come from the \((N,0,0,0)\) components \(c_n\) of
\(X_n\), \(\langle\nop\rangle=\nu+(1+\nu)c_1\) and
\(\langle\nop(\nop-1)\rangle=2\nu^2+4\nu(1+\nu)c_1+2(1+\nu)^2c_2\)
(Appendix~\ref{app:observables}).  Equations~(\ref{eq:rhoN-reconstruction})
and (\ref{eq:joint-observable}) are verified against brute-force partial
traces to better than \(10^{-15}\) per matrix element.

\subsection{Regression and emission spectrum}
\label{sec:spectrum}

The first-order coherence
\(g^{(1)}(\tau)=\langle a^\dagger(\tau)a\rangle/\langle\nop\rangle\)
follows from the quantum regression theorem as
\(\operatorname{tr}[a^\dagger e^{\mathcal L\tau}(a\rho_{\rm ss})]/\langle\nop\rangle\).
The stationary state is permutation invariant, \(a\rho_{\rm ss}\) is
therefore permutation invariant in the emitter coordinates and lies in the
sector \(K=-1\), and the regression evolution cannot leave the invariant
operator subspace; so \(a\rho_{\rm ss}\) evolves under the \(K=-1\)
block-Jacobi operator of Proposition~\ref{prop:blockjacobi}, and any
non-symmetric Liouvillian modes --- for \(N=2\) the slowly decaying
singlet--triplet coherences --- have zero amplitude in this correlator.
The coefficients of \(a\rho_{\rm ss}\) follow from those of
\(\rho_{\rm ss}\) through the \(\mu=-1\) field identities, and the
functional \(\operatorname{tr}(a^\dagger\,\cdot\,)\) picks the
\((n{=}0,(N,0,0,0))\) component (Appendix~\ref{app:observables}).
Diagonalizing the \(K=-1\) block-Jacobi matrix at a converged radial
truncation gives the pole decomposition
\begin{equation}
g^{(1)}(\tau)=\sum_jw_je^{\lambda_j\tau},\qquad\sum_jw_j=1 ,
\label{eq:g1poles}
\end{equation}
in the frame rotating at \(\omega\).  The \(w_j\) are residues of a
non-normal operator: they are complex in general, occur in conjugate pairs
with the \(\lambda_j\), and individual residues can exceed unity in
modulus while summing to one.  The emission spectrum
\(S(\omega')\propto\operatorname{Re}\int_0^\infty d\tau\,e^{i\omega'\tau}g^{(1)}(\tau)
=\operatorname{Re}\sum_jw_j/(-\lambda_j-i\omega')\)
is a rational sum of pole contributions, Lorentzian plus dispersive, which
is real and, in all cases computed, positive.  We characterize the line
by the pole \(\lambda_1\) with the largest real part among those whose
residue modulus (a conjugate pair counting jointly) exceeds a threshold
\(\theta\), and report \(-2\operatorname{Re}\lambda_1\) (the width of that
pole's Lorentzian) and \(|\operatorname{Im}\lambda_1|\) (its offset from
\(\omega\)).  Appendix~\ref{app:validation} shows that the selected pole
is the same for \(\theta\) between 1\% and 30\% in every case reported,
that \(\lambda_1\) is converged to \(10^{-12}\) in \(n_{\max}\), that it
agrees with the corresponding pole of the brute-force Liouvillian to
\(10^{-13}\) (Table~\ref{tab:linewidth-validation}), and that in the
single-line regime of Sec.~\ref{sec:lasing} the full width at half maximum
of \(S(\omega')\) equals \(-2\operatorname{Re}\lambda_1\) to 0.1--0.5\%.
In the doublet regime of Sec.~\ref{sec:lasing} the pole width is the width
of one doublet component, and the FWHM of the merged spectrum is larger.

\begin{table*}[tbp]
\caption{Dominant pole \(\lambda_1\) of \(g^{(1)}(\tau)\) from the \(K=-1\)
continued fraction (\(n_{\max}=30\); identical at \(n_{\max}=40\) to all
digits shown) and from the brute-force Liouvillian (cavity cutoff 14),
with the normalized residue modulus \(|w_1|\) of the conjugate pair
(residues of a non-normal operator need not lie below one; Sec.~\ref{sec:spectrum}).  Baseline
parameters \(A=1\), \(B=0.7\), \(C=B/2\), \(g=1.1\), \(\Delta=0\).}
\label{tab:linewidth-validation}
\begin{ruledtabular}
\begin{tabular}{ccccc}
\(N\) & \(s\) & \(\lambda_1\) (continued fraction) & \(\lambda_1\) (brute force) & \(|w_1|\)\\
\hline
1 & 0.3 & \(-0.448708365\pm0.457879042i\) & \(-0.448708365\pm0.457879042i\) & 0.995\\
1 & 0.8 & \(-0.471261791\pm0.254376872i\) & \(-0.471261791\pm0.254376872i\) & 0.992\\
2 & 0.3 & \(-0.479710666\pm0.651980285i\) & \(-0.479710666\pm0.651980285i\) & 0.961\\
2 & 0.8 & \(-0.563533729\pm0.302524753i\) & \(-0.563533729\pm0.302524753i\) & 0.986\\
3 & 0.6 & \(-0.558125741\pm0.522869362i\) & --- & 1.018\\
\end{tabular}
\end{ruledtabular}
\end{table*}

\section{Collective correlations in two and three emitters}
\label{sec:collective}

Unless stated otherwise the baseline parameters are \(A=1\), \(B=0.7\),
\(C=B/2\), \(g=1.1\), \(\Delta=0\), \(\nu=0\): a strongly coupled,
bad-cavity (\(A>C\)) few-quanta regime with \(\langle\nop\rangle<1\).

\subsection{Exact two-emitter feeding identity}
\label{sec:twoatom}

Let \(N_e=\sum_j\tau_+^{(j)}\tau_-^{(j)}\) and \(Z=\langle a^\dagger S_-\rangle\).
The exact moment equations
\(\frac{d}{dt}\langle\nop\rangle=-A(\langle\nop\rangle-\nu)-g\operatorname{Im}Z\) and
\(\frac{d}{dt}\langle N_e\rangle=B(Ns-\langle N_e\rangle)+g\operatorname{Im}Z\)
give in stationarity
\begin{equation}
A(\langle\nop\rangle-\nu)=B(Ns-\langle N_e\rangle)=-g\operatorname{Im}Z ,
\label{eq:flux}
\end{equation}
and the polarization equation
\(\dot Z=-(A/2+C+i\Delta)Z-\tfrac{ig}{2}[\langle S_+S_-\rangle+\langle\nop S_z\rangle]\),
in which \(-A/2\) is independent of \(\nu\) and \(-C\) collects decay,
pump, and dephasing, yields
\begin{align}
A(\langle\nop\rangle-\nu)&=\mathcal R\big[\langle S_+S_-\rangle+\langle\nop S_z\rangle\big],\notag\\
\mathcal R&=\frac{g^2(A/2+C)}{2[(A/2+C)^2+\Delta^2]} .
\label{eq:flux2}
\end{align}
\(\mathcal R\) is a Lorentzian response coefficient of the exact identity;
it reduces to the familiar effective emission/absorption rate only in the
eliminated-atom limit of Sec.~\ref{sec:discussion}.  For two atoms, with
\(\mathcal C_{12}=\langle\tau_+^{(1)}\tau_-^{(2)}\rangle\), permutation
invariance gives
\(\mathcal C_{12}=\langle\tau_+^{(2)}\tau_-^{(1)}\rangle=\mathcal C_{12}^*\),
so \(\mathcal C_{12}\) is real, \(\langle S_+S_-\rangle=\langle N_e\rangle+2\mathcal C_{12}\),
and
\begin{equation}
A(\langle\nop\rangle-\nu)=\mathcal R\langle N_e\rangle
+2\mathcal R\,\mathcal C_{12}+\mathcal R\langle\nop S_z\rangle .
\label{eq:coherenceflux}
\end{equation}
The decomposition is exact; ``population,'' ``mutual coherence,'' and
``saturation'' are interpretive names for the three terms.
Table~\ref{tab:balance} evaluates them on the exact stationary state.

\begin{table}[tbp]
\caption{Two-atom photon balance, Eq.~(\ref{eq:coherenceflux}), at the
baseline parameters; \(\eta_C=2\mathcal C_{12}/\langle N_e\rangle\) is the
mutual-coherence term as a fraction of the population term.}
\label{tab:balance}
\begin{ruledtabular}
\begin{tabular}{cccccc}
\(s\) & \(A\langle\nop\rangle\) & \(\mathcal R\langle N_e\rangle\) &
\(2\mathcal R\,\mathcal C_{12}\) & \(\mathcal R\langle\nop S_z\rangle\) & \(\eta_C\)\\
\hline
0.05 & 0.0166 & 0.0543 & \(-0.0153\) & \(-0.0224\) & \(-28.1\%\)\\
0.20 & 0.0743 & 0.2092 & \(-0.0496\) & \(-0.0854\) & \(-23.7\%\)\\
0.40 & 0.1708 & 0.3957 & \(-0.0697\) & \(-0.1552\) & \(-17.6\%\)\\
0.60 & 0.2922 & 0.5570 & \(-0.0621\) & \(-0.2027\) & \(-11.1\%\)\\
0.80 & 0.4408 & 0.6906 & \(-0.0289\) & \(-0.2209\) & \(-4.2\%\)\\
0.90 & 0.5260 & 0.7464 & \(-0.0034\) & \(-0.2170\) & \(-0.5\%\)\\
\end{tabular}
\end{ruledtabular}
\end{table}

The mutual-coherence term is not a small correction: at \(s=0.05\) it
removes 28\% of the population term, and together with the saturation term
reduces the feeding rate to 31\% of the population-only estimate.  Its
sign has an exact representation-theoretic meaning.  In the basis
\(|1,0\rangle=(|eg\rangle+|ge\rangle)/\sqrt2\),
\(|0,0\rangle=(|eg\rangle-|ge\rangle)/\sqrt2\), the operator
\(\tau_+^{(1)}\tau_-^{(2)}\) acts as
\(|1,0\rangle\mapsto\tfrac12(|1,0\rangle+|0,0\rangle)\),
\(|0,0\rangle\mapsto-\tfrac12(|1,0\rangle+|0,0\rangle)\), and the
off-diagonal contributions to the trace cancel by Hermiticity, so
\begin{equation}
2\,\mathcal C_{12}=p_{1,0}-p_{0,0},
\label{eq:coherencesinglet}
\end{equation}
with \(p_{1,0},p_{0,0}\) the triplet-\(m{=}0\) and singlet populations.  A
negative coherence term means that the dark singlet is more populated
than the bright \(m=0\) triplet state: the pair is subradiant relative to
two independent atoms.  Since the singlet is annihilated by \(S_\pm\)
(Sec.~\ref{sec:sectors}), only local pump and decay move population into
and out of it, and in this regime they trap more population there than
the cavity removes from the triplet.

The sign is not universal, and this is the main lesson of the identity:
cavity-mediated pair coherence is not intrinsically superradiant.
Section~\ref{sec:benchmark} shows that at \(s=0.9\) the pair coherence
becomes positive for \(N\ge3\) even in this bad-cavity regime
(Fig.~\ref{fig:Nscaling}c), and in the good-cavity regime of
Sec.~\ref{sec:lasing} it is positive already for \(N=2\) above the
lasing crossover (\(\mathcal C_{12}=+0.048\) at \(s=0.9\)), where the
cavity field builds up an in-phase, superradiant coherence between the
atoms.  Equation~(\ref{eq:coherenceflux}) is the exact diagnostic that
decides, for given parameters, which regime one is in.

\subsection{Weak-pump hierarchy of connected correlations}
\label{sec:hierarchy}

For \(N=3\), an atomic monomial with \(r\) raising and \(l\) lowering
operators has \(U(1)\) charge \(r-l\), and its stationary expectation
vanishes unless \(r=l\).  The natural charge-neutral connected
three-atom correlator is
\begin{equation}
\Gamma_3\equiv C_{3z}=\langle\tau_+^{(1)}\tau_-^{(2)}\sigma_z^{(3)}\rangle
-\langle\tau_+^{(1)}\tau_-^{(2)}\rangle\langle\sigma_z^{(3)}\rangle ,
\label{eq:C3z}
\end{equation}
and with \(n_j=\tau_+^{(j)}\tau_-^{(j)}\) the third excitation cumulant is
\begin{equation}
K_3=\langle n_1n_2n_3\rangle-3\langle n_1n_2\rangle\langle n_1\rangle+2\langle n_1\rangle^3 ,
\label{eq:K3}
\end{equation}
the residual of the second-order cumulant closure
\cite{ZhouZengXuYou2006}.  We write
\(C_2^{(N)}=\langle\tau_+^{(1)}\tau_-^{(2)}\rangle\) for the pair coherence
as a marginal of the \(N\)-atom state (at \(s=0.3\): \(-0.044\) for
\(N=2\), \(-0.032\) for \(N=3\)).  Figure~\ref{fig:C3zK3} shows
\(\Gamma_3\) and \(K_3\) at \(s=0.6\) as functions of coupling and
detuning: both are interaction generated and resonance sensitive,
\(\Gamma_3\) is negligible below \(g/A\approx0.4\), reaches
\(5.0\times10^{-3}\) at the baseline coupling (about one third of the
factorized value \(|C_2^{(3)}\langle\sigma_z\rangle|\)), and changes sign
at \(|\Delta|/A\approx0.65\); \(K_3\) is positive and an order of magnitude
smaller.

\begin{figure}[tbp]
\includegraphics[width=\linewidth]{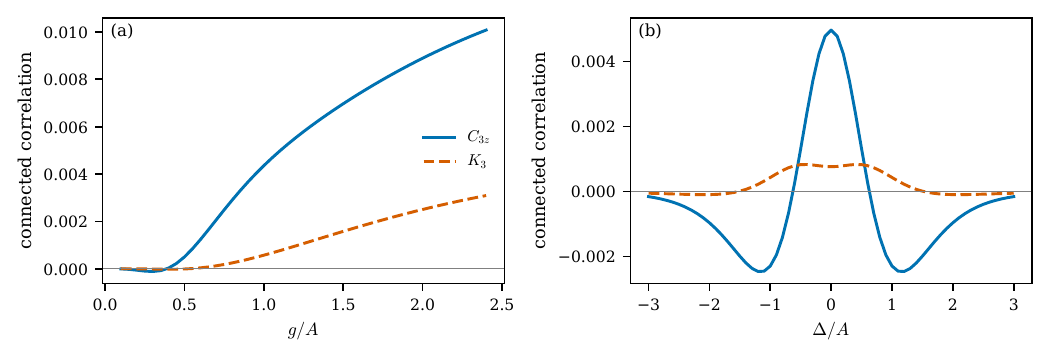}
\caption{Connected three-atom correlations \(\Gamma_3=C_{3z}\) (solid) and
\(K_3\) (dashed) at \(s=0.6\), \(A=1\), \(B=0.7\), \(C=B/2\), \(\nu=0\):
(a) versus coupling at \(\Delta=0\); (b) versus detuning at \(g=1.1\).}
\label{fig:C3zK3}
\end{figure}

The pump order at which these structures first appear follows from an
excitation filtration.  Throughout this subsection the cavity is at zero
temperature, \(\nu=0\), as in all baseline calculations; for \(\nu>0\)
the term \(A\nu\mathcal D[a^\dagger]\) in \(\mathcal L_0\) raises the
excitation and the filtration below does not apply.  The Liouvillian is
linear in \(s\),
\(\mathcal L(s)=\mathcal L_0+s\mathcal L_1\) with
\(\mathcal L_1=B\sum_j(\mathcal D[\tau_+^{(j)}]-\mathcal D[\tau_-^{(j)}])\),
and at \(s=0\) the stationary state is \(\rho^{(0)}=|0\rangle\langle0|\otimes|g\cdots g\rangle\langle g\cdots g|\).
In any finite Fock truncation \(\mathcal L_0\) has a simple zero eigenvalue,
so the stationary state has the regular expansion
\(\rho_{\rm ss}(s)=\sum_rs^r\rho^{(r)}\) with
\(\mathcal L_0\rho^{(r)}=-\mathcal L_1\rho^{(r-1)}\), \(\operatorname{tr}\rho^{(r)}=0\)
for \(r\ge1\); all statements below hold at the level of this series and
are stable under increase of the truncation.  Let
\(\mathcal E=\nop+\sum_jn_j\) be the total excitation operator, with
eigenstates \(|i\rangle\) and eigenvalues \(E_i\), and let
\(V_r=\operatorname{span}\{|i\rangle\langle j|:E_i\le r,\ E_j\le r\}\).

\begin{lemma}[Weak-pump excitation filtration, \(\nu=0\)]
\label{lem:filtration}
For \(\nu=0\): \(\mathcal L_0V_r\subseteq V_r\), \(\mathcal L_1V_r\subseteq V_{r+1}\), and
\(\rho^{(r)}\in V_r\) for all \(r\ge0\).
\end{lemma}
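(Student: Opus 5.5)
The plan is to check, term by term, how each piece of \(\mathcal L_0\) and \(\mathcal L_1\) acts on an operator \(|i\rangle\langle j|\) with \(E_i,E_j\le r\). The statement about \(\rho^{(r)}\) then follows by induction using the perturbative recursion.

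\emph{Filtration of \(\mathcal L_0\).} At \(\nu=0\), \(\mathcal L_0\) has four kinds of pieces. The Hamiltonian part \(-\tfrac{i}{\hbar}[H,\cdot]\) commutes with \(\mathcal E\), since \(\nop\) and \(S_z\) commute with \(\mathcal E\) and \(a^\dagger S_-\), \(aS_+\) conserve the total excitation. Hence \(H\) maps each eigenspace of \(\mathcal E\) to itself, so \(H|i\rangle\langle j|\) and \(|i\rangle\langle j|H\) stay in \(V_r\).

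Every other term of \(\mathcal L_0\) is a dissipator \(\mathcal D[L]\), with \(L\in\{\sqrt{A}\,a,\ \sqrt{B}\,\tau_-^{(j)},\ \sigma_z^{(j)}\}\). Here \(B(1-s)\mathcal D[\tau_-]\) contributes \(B\mathcal D[\tau_-]\) to \(\mathcal L_0\) and \(-sB\mathcal D[\tau_-]\) to \(\mathcal L_1\). The dephasing coefficient \((C-B/2)/2\) is independent of \(s\), so that term lies in \(\mathcal L_0\).

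Each such \(L\) lowers \(\mathcal E\) by one or preserves it; in the notation of the paper, \([\mathcal E,L]=-L\) or \(0\). Therefore:
\begin{itemize}
\item the jump term \(L|i\rangle\langle j|L^\dagger\) has both excitation labels \(\le r\);
\item \(L^\dagger L\) commutes with \(\mathcal E\), so the anticommutator terms preserve the labels.
\end{itemize}
This gives \(\mathcal L_0V_r\subseteq V_r\).

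\emph{Filtration of \(\mathcal L_1\).} \(\mathcal L_1=B\sum_j(\mathcal D[\tau_+^{(j)}]-\mathcal D[\tau_-^{(j)}])\). The \(\mathcal D[\tau_-]\) part preserves \(V_r\) by the argument above. In \(\mathcal D[\tau_+]\), the jump \(\tau_+|i\rangle\langle j|\tau_-\) raises both labels by at most one, so it lands in \(V_{r+1}\). The operator \(\tau_-\tau_+\) commutes with \(\mathcal E\). Hence \(\mathcal L_1V_r\subseteq V_{r+1}\).

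\emph{Induction for \(\rho^{(r)}\).} The base case is \(\rho^{(0)}\in V_0\). Now assume \(\rho^{(r-1)}\in V_{r-1}\). Then \(-\mathcal L_1\rho^{(r-1)}\in V_r\), and \(\rho^{(r)}\) is the unique solution of \(\mathcal L_0\rho^{(r)}=-\mathcal L_1\rho^{(r-1)}\) with \(\operatorname{tr}\rho^{(r)}=0\). This step, showing that the solution must lie in \(V_r\), is the main obstacle.

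My approach is to restrict \(\mathcal L_0\) to the finite-dimensional invariant subspace \(V_r\). First I check that the right-hand side lies in the range of \(\mathcal L_0|_{V_r}\). Since \(\mathcal L_0\) is trace preserving, \(\operatorname{tr}\) annihilates its range, and \(\operatorname{tr}\mathcal L_1X=0\) for every \(X\).

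Next I show that the kernel of \(\mathcal L_0|_{V_r}\) is exactly \(\operatorname{span}\rho^{(0)}\). The kernel of \(\mathcal L_0\) on the full truncated space is one-dimensional and spanned by \(\rho^{(0)}\in V_0\subseteq V_r\). Consequently, on \(V_r\) the map \(\mathcal L_0\) is invertible modulo \(\rho^{(0)}\), with range equal to the traceless part of \(V_r\). This range identification needs the complementary statement that \(\operatorname{tr}\) is the only left null vector on \(V_r\), which I obtain by rank–nullity, since \(\ker\mathcal L_0|_{V_r}\) is one-dimensional.

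So there is a \(Y\in V_r\) with \(\mathcal L_0Y=-\mathcal L_1\rho^{(r-1)}\). Set \(Y'=Y-(\operatorname{tr}Y)\rho^{(0)}\), which lies in \(V_r\) and is traceless. By uniqueness of the traceless solution in the full space, \(\rho^{(r)}=Y'\in V_r\). The truncation is harmless because every \(V_r\) with \(r\) below the cutoff sits inside each sufficiently large truncation.
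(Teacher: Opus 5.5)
Your proposal is correct and follows the paper's proof essentially step by step. It uses the same term-by-term check that the Hamiltonian and the jump operators \(a\), \(\tau_-^{(j)}\), \(\sigma_z^{(j)}\) of \(\mathcal L_0\) never raise \(\mathcal E\), while the \(\tau_+^{(j)}\) sandwich term in \(\mathcal L_1\) raises each label by at most one, and then the same induction on \(r\). The only difference is that you spell out the final step: rank--nullity on \(\mathcal L_0|_{V_r}\), then subtracting \((\operatorname{tr}Y)\rho^{(0)}\). The paper compresses this into the remark that \(V_r\) is a finite-dimensional invariant subspace containing the stationary state, on which zero is a simple eigenvalue.
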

\begin{proof}
The Tavis--Cummings Hamiltonian and \(\nop\), \(S_z\) commute with
\(\mathcal E\), so the commutator term maps \(V_r\) to itself.  For each
jump operator \(L\in\{a,\tau_-^{(j)},\sigma_z^{(j)}\}\) of \(\mathcal L_0\),
\(L\) either lowers or preserves \(\mathcal E\), so \(L|i\rangle\langle j|L^\dagger\)
and \(\{L^\dagger L,|i\rangle\langle j|\}\) stay in \(V_r\); hence
\(\mathcal L_0V_r\subseteq V_r\).  In \(\mathcal L_1\), the sandwich term
\(\tau_+^{(j)}|i\rangle\langle j|\tau_-^{(j)}\) raises both \(E_i\) and \(E_j\)
by at most one and the anticommutator terms preserve them, so
\(\mathcal L_1V_r\subseteq V_{r+1}\).  Finally, \(\rho^{(0)}\in V_0\), and if
\(\rho^{(r-1)}\in V_{r-1}\) then \(\mathcal L_1\rho^{(r-1)}\in V_r\); since
\(V_r\) is a finite-dimensional invariant subspace of \(\mathcal L_0\)
containing the stationary state and on which zero is a simple eigenvalue,
the traceless solution of \(\mathcal L_0\rho^{(r)}=-\mathcal L_1\rho^{(r-1)}\)
lies in \(V_r\).
\end{proof}

\begin{proposition}[Pump order of excitation cumulants, \(\nu=0\)]
\label{prop:Km-scaling}
For \(\nu=0\) and any \(k\) distinct atoms, \(\langle n_{i_1}\cdots n_{i_k}\rangle=O(s^k)\),
and for \(2\le m\le N\) the joint cumulant \(K_m=\kappa(n_1,\ldots,n_m)\)
satisfies \(K_m=O(s^m)\) as \(s\to0\).
\end{proposition}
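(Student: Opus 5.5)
The plan is to read both claims off Lemma~\ref{lem:filtration} together with the regular expansion \(\rho_{\rm ss}(s)=\sum_r s^r\rho^{(r)}\). The second claim will then follow from the first by pure combinatorics, with no new dynamical input.

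\emph{Step 1 (moments).} Fix \(k\) distinct atoms and set \(O=n_{i_1}\cdots n_{i_k}\). In the product basis of Fock states and atomic configurations \(O\) is diagonal. It is the projector onto configurations in which atoms \(i_1,\dots,i_k\) are all excited, so every state in its range has \(\mathcal E\ge k\). For \(r<k\), Lemma~\ref{lem:filtration} gives \(\rho^{(r)}\in V_r\), so \(\rho^{(r)}\) is a combination of \(|i\rangle\langle j|\) with \(E_i,E_j\le r\). Each such term contributes \(\operatorname{tr}(O|i\rangle\langle j|)=\langle j|O|i\rangle\). This vanishes because \(O|i\rangle=0\): an eigenstate \(|i\rangle\) with \(E_i\le r<k\) cannot have \(k\) excited atoms, and \(O\) commutes with \(\mathcal E\). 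Hence \(\operatorname{tr}(O\rho^{(r)})=0\) for all \(r<k\), and \(\langle O\rangle=\sum_{r\ge k}s^r\operatorname{tr}(O\rho^{(r)})=O(s^k)\).

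\emph{Step 2 (legitimacy of the expansion).} I need the series to control \(\langle O\rangle\) as an actual function of \(s\) near \(0\), not only formally. In a finite Fock truncation, \(\mathcal L(s)=\mathcal L_0+s\mathcal L_1\) has a simple zero eigenvalue at \(s=0\). The normalized kernel vector, obtained for example by Cramer's rule on the trace-augmented linear system, is therefore a rational function of \(s\) that is analytic at \(s=0\). Its Taylor coefficients are exactly the \(\rho^{(r)}\) defined by the recursion in the text. So \(\langle O\rangle(s)\) is analytic with vanishing first \(k\) Taylor coefficients, which gives \(O(s^k)\) in the genuine sense. Uniformity in the cutoff comes from the stated stability of the series under increasing truncation. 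Alternatively, for each fixed order \(r\) the coefficients \(\rho^{(r)}\in V_r\) involve only Fock levels \(\le r\), so they are cutoff-independent once the cutoff exceeds \(r\).

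\emph{Step 3 (cumulants).} I will use the moment--cumulant formula
\[
K_m=\sum_{\pi}(|\pi|-1)!\,(-1)^{|\pi|-1}\prod_{b\in\pi}\Big\langle\prod_{j\in b}n_j\Big\rangle ,
\]
where the sum runs over set partitions \(\pi\) of \(\{1,\dots,m\}\). Each block \(b\) consists of distinct atoms, so by Step~1 its factor is \(O(s^{|b|})\). Each product over a partition is therefore \(O(s^{\sum_b|b|})=O(s^m)\). The sum is finite, so \(K_m=O(s^m)\). For \(m=3\) this reproduces Eq.~(\ref{eq:K3}) via permutation invariance, \(\langle n_1n_2\rangle=\langle n_in_j\rangle\) and \(\langle n_1\rangle=\langle n_j\rangle\).

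The only delicate point is Step~2, namely making sure the formal expansion controls the true stationary moments. I expect this to be routine given the simple-eigenvalue statement already made in the text. Note that this argument yields only \(O(s^m)\) and does not exhibit any cancellation beyond that, which is all the proposition claims.
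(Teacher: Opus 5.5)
Your proposal is correct and follows the paper's proof essentially step for step: Lemma~\ref{lem:filtration} makes \(\operatorname{tr}(n_{i_1}\cdots n_{i_k}\rho^{(r)})\) vanish for \(r<k\), and the moment--cumulant sum over set partitions then gives \(K_m=O(s^m)\) term by term. Your Step~2 spells out the regularity of the expansion, which the paper asserts in the text just before Lemma~\ref{lem:filtration}; your observation that \(\rho^{(r)}\in V_r\) does not depend on the cutoff once the Fock space contains level \(r\) is a valid and somewhat sharper way to state the paper's claim of stability under truncation.
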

\begin{proof}
\(\operatorname{tr}(\rho^{(r)}n_{i_1}\cdots n_{i_k})\) is a sum of diagonal
elements \(\langle i|\rho^{(r)}|i\rangle\) over states with the \(k\) atoms
excited, hence \(E_i\ge k\); by Lemma~\ref{lem:filtration} these vanish for
\(r<k\), which is the moment statement.  By the moment--cumulant relation,
\(K_m\) is a finite sum over set partitions of \(\{1,\ldots,m\}\) of
products of such moments; a partition with block sizes \(k_1,\ldots,k_p\),
\(\sum_jk_j=m\), contributes \(O(s^{k_1})\cdots O(s^{k_p})=O(s^m)\).  Every
term is individually \(O(s^m)\).
\end{proof}

The lemma also settles the pair coherence and \(\Gamma_3\).  Any element
\(\langle i|\rho^{(1)}|j\rangle\) contributing to
\(\langle\tau_+^{(1)}\tau_-^{(2)}\rangle\) has atom 2 excited in \(|j\rangle\)
and atom 1 in \(|i\rangle\), so \(C_2^{(N)}=c_1s+O(s^2)\); and since
\(E_i,E_j\le1\) the third atom is in its ground state in both, so
\(\langle\tau_+^{(1)}\tau_-^{(2)}\sigma_z^{(3)}\rangle^{(1)}=-\langle\tau_+^{(1)}\tau_-^{(2)}\rangle^{(1)}\)
while \(\langle\sigma_z^{(3)}\rangle=-1+O(s)\); the first-order
contributions to \(\Gamma_3\) cancel and \(\Gamma_3=c_2s^2+O(s^3)\).  For
the general connected coherences
\(\Gamma_m=\kappa(\tau_+^{(1)}\tau_-^{(2)},\sigma_z^{(3)},\ldots,\sigma_z^{(m)})\)
such a cancellation would have to be established at every order, and we
state the pattern as a conjecture:

\begin{conjecture}[Pump order of connected coherences]
\label{conj:hierarchy}
For \(\nu=0\) and \(3\le m\le N\), \(\Gamma_m=O(s^{m-1})\), generically with nonzero
leading coefficient.
\end{conjecture}

Figure~\ref{fig:hierarchy} tests both statements on the exact solution.
At \(N=3\) the logarithmic slopes over \(10^{-3}\le s\le10^{-2}\) are
\(0.997\), \(1.997\), \(2.995\) for \(C_2^{(3)}\), \(\Gamma_3\), \(K_3\),
with leading coefficients
\(C_2^{(3)}\simeq-0.16438\,s\), \(\Gamma_3\simeq0.04355\,s^2\),
\(K_3\simeq0.03219\,s^3\).  At \(N=4\) the slopes over \(10^{-3}\le s\le8\times10^{-3}\) are \(0.997\)
(\(C_2^{(4)}\)), \(2.06\) (\(\Gamma_3\)), \(2.98\) (\(\Gamma_4\)), \(3.00\)
(\(K_3\)), and \(4.06\) (\(K_4\)): Conjecture~\ref{conj:hierarchy} holds
through \(m=4\) and Proposition~\ref{prop:Km-scaling} is realized with
nonzero coefficients.  For the excitation cumulants, and numerically also for the connected
coherences through \(m=4\), each additional atom entering a connected
correlation costs one further power of the pump --- a hierarchy that
low-order cumulant treatments impose by truncation rather than derive.

\begin{figure}[tbp]
\includegraphics[width=\linewidth]{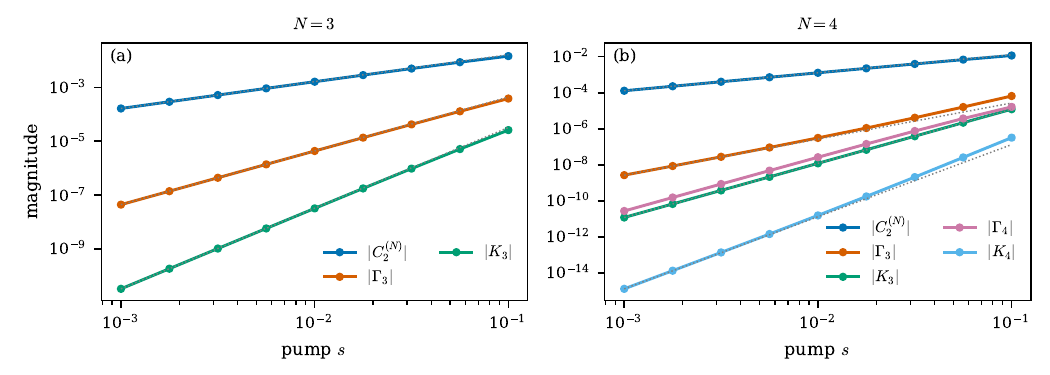}
\caption{Weak-pump hierarchy of connected correlations at the baseline
parameters for (a) \(N=3\) and (b) \(N=4\).  Dotted lines are the power
laws \(s^1,s^2,s^3,s^4\) anchored at the smallest pump.}
\label{fig:hierarchy}
\end{figure}

\subsection{Collective-spin sectors and their parity}
\label{sec:sectors}

With \(\mathbf J=\tfrac12\sum_j\boldsymbol\sigma^{(j)}\), the normalized
ladders give \(S_\pm=J_\pm\), which commute with \(\mathbf J^2\) and are
block diagonal in the total spin \(J\) \cite{Dicke1954}.

\begin{proposition}[Parity of structural cavity darkness]
\label{prop:parity}
The collective coupling annihilates the minimal-spin sector,
\(S_\pm P_{J_{\min}}=0\), if and only if \(N\) is even.
\end{proposition}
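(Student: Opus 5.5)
The plan rests on standard angular-momentum facts about the collective spin \(\mathbf J\) of \(N\) spin-\(\tfrac12\) constituents. First, the total-spin content \(J\) runs over \(J_{\max}=N/2,\;N/2-1,\ldots\) down to \(J_{\min}\), where \(J_{\min}=0\) for even \(N\) and \(J_{\min}=\tfrac12\) for odd \(N\). Second, since \(S_\pm=J_\pm\) commute with \(\mathbf J^2\), the operator \(S_\pm P_{J_{\min}}\) acts within the range of \(P_{J_{\min}}\). This range decomposes as a direct sum of copies of the spin-\(J_{\min}\) irrep, indexed by a multiplicity label \(\lambda\) carried by the \(S_N\) (Young-diagram) structure. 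On each copy the ladder operators act by the standard formula
\[
J_\pm|J,m,\lambda\rangle=\sqrt{J(J+1)-m(m\pm1)}\,|J,m\pm1,\lambda\rangle .
\]
The proposition therefore reduces to asking whether these coefficients vanish for every \(m\) with \(|m|\le J_{\min}\).

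The two directions then follow directly.

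\emph{If \(N\) is even:} \(J_{\min}=0\), so the only value is \(m=0\). Both coefficients \(\sqrt{0\cdot 1-0}\) vanish, hence \(J_\pm P_{0}=0\). Equivalently, a spin-0 multiplet is one-dimensional, and \(J_\pm\) would have to map it to \(m=\pm1\), which does not exist within \(J=0\).

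\emph{If \(N\) is odd:} \(J_{\min}=\tfrac12\). Take any doublet \(\{|\tfrac12,\pm\tfrac12,\lambda\rangle\}\). Then \(J_+|\tfrac12,-\tfrac12,\lambda\rangle=|\tfrac12,\tfrac12,\lambda\rangle\neq0\), and similarly for \(J_-\). Hence \(S_\pm P_{J_{\min}}\neq0\).

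The only genuinely nontrivial input is that the sector \(J_{\min}\) is nonempty, i.e.\ that it has nonzero multiplicity. For even \(N\) this means the spin-0 multiplicity, the Catalan-type number \(\binom{N}{N/2}-\binom{N}{N/2+1}\), is positive. For odd \(N\) it means the spin-\(\tfrac12\) multiplicity \(\binom{N}{(N-1)/2}-\binom{N}{(N-3)/2}\) is positive. I would either verify these by the standard ballot/Catalan formula, or construct explicit states:
\begin{itemize}
\item for even \(N\), a product of pair singlets \((|eg\rangle-|ge\rangle)/\sqrt2\);
\item for odd \(N\), the same product tensored with one extra atom in \(|e\rangle\), which is a \(J=\tfrac12\), \(m=\tfrac12\) state.
\end{itemize}
The explicit construction is cleaner, because for even \(N\) the annihilation by \(S_\pm\) can then be checked directly. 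For odd \(N\), the extra atom shows that \(S_-\) gives a nonzero result: \(S_-\) applied to this state equals the singlet product times \(|g\rangle\).

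I expect the main obstacle to be a point of interpretation rather than computation. One must fix that \(P_{J_{\min}}\) means the projector onto the full isotypic component of minimal total spin, including all multiplicities, and not onto a single multiplet. One must also make sure the claim ``\(S_\pm P_{J_{\min}}=0\)'' is read as an operator identity on the \(2^N\)-dimensional Hilbert space. Once this is settled, the commutation of \(J_\pm\) with \(\mathbf J^2\) and the ladder formula make both directions immediate.
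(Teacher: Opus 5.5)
Your proof is correct and is the same argument as the paper's: \(J_{\min}=0\) or \(\tfrac12\) according to the parity of \(N\), a one-dimensional spin-0 multiplet is annihilated by \(J_\pm\), and \(J_+|\tfrac12,-\tfrac12\rangle=|\tfrac12,\tfrac12\rangle\neq0\). Your explicit singlet-product states make precise that the \(J_{\min}\) sector is nonempty, which the paper leaves implicit. Note that these states only witness nonemptiness; the vanishing of \(S_\pm\) on the whole even-\(N\) sector, with all its multiplicity, still comes from the ladder formula.
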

\begin{proof}
\(J_{\min}=0\) for even \(N\) and \(\tfrac12\) for odd \(N\).  A \(J=0\)
multiplet is one-dimensional, so \(J_\pm\) annihilate it; for
\(J=\tfrac12\), \(J_+|\tfrac12,-\tfrac12\rangle=|\tfrac12,\tfrac12\rangle\neq0\).
\end{proof}

This is a kinematic statement about the coupling, not about stationary
feeding: for odd \(N\) the minimal sector is not symmetry-protected, but
whether and how strongly it feeds the cavity is a dynamical question
(trivially, it does not for \(g=0\)).  The standard-form diagnostic is the
sector-resolved feeding rate
\begin{equation}
\Phi_J=-g\,\frac{\operatorname{Im}\langle a^\dagger S_-P_J\rangle}{\langle P_J\rangle},
\label{eq:PhiJ}
\end{equation}
the rate at which sector \(J\) feeds the cavity per unit population, the
sector-conditional version of Eq.~(\ref{eq:flux}), obtained from \(X_0\)
by Eq.~(\ref{eq:joint-observable}) with \(O=S_-P_J\) and in agreement with
the brute-force value to \(10^{-16}\).  For even \(N\),
\(\Phi_{J_{\min}}=0\) exactly.  Table~\ref{tab:sectors} and
Fig.~\ref{fig:sectorN3} show that for \(N=3\) in the baseline regime the
\(J=1/2\) sector is cavity active, \(\Phi_{1/2}>0\), but subradiant:
\(\Phi_{3/2}/\Phi_{1/2}\) grows from 1.5 at weak pump to 4.8 near
\(s=0.85\), while the sector traps up to 55\% of the population.
Section~\ref{sec:parityN} follows this to \(N=8\).

\begin{table}[tbp]
\caption{Sector-resolved structure at the baseline parameters: population
\(p_J\) and conditional photon number \(\bar n_J=\langle\nop P_J\rangle/p_J\)
of the minimal sector, and feeding rates of the minimal and maximal
sectors (\(J=1\) for \(N=2\), \(J=3/2\) for \(N=3\)).}
\label{tab:sectors}
\begin{ruledtabular}
\begin{tabular}{ccccccc}
\(N\) & \(s\) & \(J_{\min}\) & \(p_{J_{\min}}\) & \(\bar n_{J_{\min}}\) & \(\Phi_{J_{\min}}/A\) & \(\Phi_{J_{\max}}/A\)\\
\hline
2 & 0.05 & 0 & 4.8\% & 0.0069 & 0 & 0.0174\\
2 & 0.30 & 0 & 21.5\% & 0.0532 & 0 & 0.1524\\
2 & 0.60 & 0 & 28.3\% & 0.1495 & 0 & 0.4074\\
2 & 0.90 & 0 & 24.1\% & 0.3399 & 0 & 0.6934\\
3 & 0.05 & 1/2 & 9.5\% & 0.0146 & 0.0118 & 0.0191\\
3 & 0.30 & 1/2 & 42.4\% & 0.1072 & 0.0741 & 0.2035\\
3 & 0.60 & 1/2 & 54.8\% & 0.2890 & 0.1583 & 0.7003\\
3 & 0.90 & 1/2 & 47.3\% & 0.6253 & 0.2597 & 1.2491\\
\end{tabular}
\end{ruledtabular}
\end{table}

\begin{figure}[tbp]
\includegraphics[width=\linewidth]{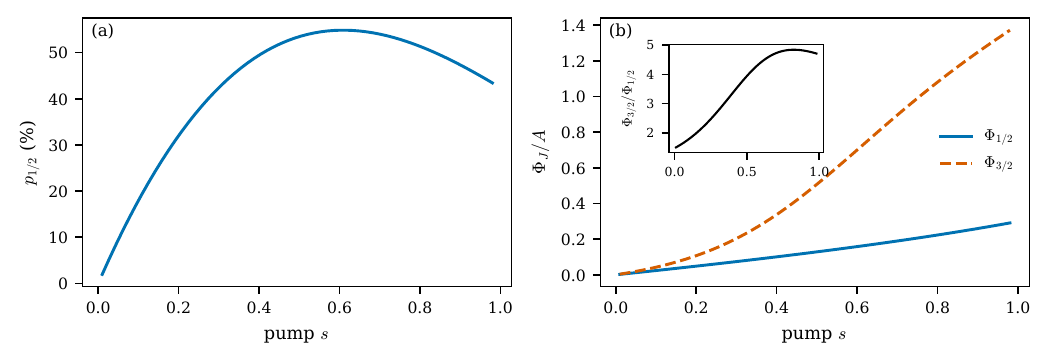}
\caption{Sector-resolved structure of the \(N=3\) stationary state at the
baseline parameters.  (a) Population of the \(J=1/2\) sector, peaking at
55\% near \(s=0.6\).  (b) Feeding rates \(\Phi_{1/2}\) (solid) and
\(\Phi_{3/2}\) (dashed); inset: their ratio.}
\label{fig:sectorN3}
\end{figure}

\subsection{Irreducible three-body information}
\label{sec:maxent}

A complementary question is whether the three-atom state is determined by
its one- and two-body marginals.  With \(\rho_3\) the exact reduced atomic
state and \(\rho_3^{(2)}\) the maximum-entropy state with the same
permutation-symmetric one- and two-body marginals (nine constraint
operators; Appendix~\ref{app:maxent}), the irreducible three-body
information \(I_3^{\rm irr}=S(\rho_3^{(2)})-S(\rho_3)\ge0\)
\cite{LindenPopescuWootters2002,Zhou2008,Zhou2009} is
\(1.08,\,2.89,\,3.48,\,2.82\times10^{-4}\) nats at \(s=0.3,0.5,0.7,0.9\),
resolved by eight orders of magnitude above the constraint residuals, with
a quadratic onset at small \(s\).  It is small, but its meaning connects
directly to Sec.~\ref{sec:benchmark}: the exact three-atom state contains
information that no reconstruction from pair marginals can recover, and
the two-body maximum-entropy state misses \(\Gamma_3\) by two to three
orders of magnitude at intermediate and strong pump
(Table~\ref{tab:maxent} in Appendix~\ref{app:maxent}).

\section{Scaling with \(N\) and the second-order cumulant closure}
\label{sec:benchmark}

\subsection{Exact results to \(N=16\)}
\label{sec:Nscaling}

Figure~\ref{fig:Nscaling} shows \(\langle\nop\rangle\), \(g^{(2)}(0)\), and
\(C_2^{(N)}\) for \(N=1,\ldots,16\) at three pump values in the baseline
regime.  Below inversion, \(s=0.3\), the photon number saturates at
\(\langle\nop\rangle\approx0.27\) while \(g^{(2)}(0)\) rises to 2.15, slightly
above thermal; at \(s=0.6\) it saturates at 1.61; at \(s=0.9\) the photon
number grows linearly, 0.26 per atom, and \(g^{(2)}(0)\) passes through a
maximum of 1.256 at \(N=8\).  None of these curves shows even/odd
structure.  The pair coherence is negative and decreasing in magnitude
roughly as \(1/N\) at \(s=0.3\) and \(0.6\), but at \(s=0.9\) it changes
sign between \(N=2\) and \(N=3\) and settles at \(+0.009\).

\begin{figure*}[tbp]
\includegraphics[width=\textwidth]{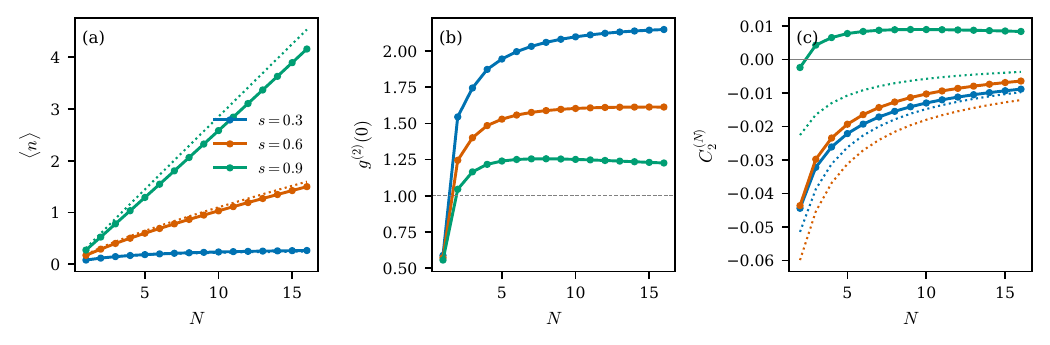}
\caption{Scaling with atom number at the baseline parameters for
\(s=0.3,0.6,0.9\): (a) mean photon number, (b) \(g^{(2)}(0)\), (c) pair
coherence \(C_2^{(N)}\).  Symbols: exact continued fraction; dotted: the
second-order cumulant closure, Eq.~(\ref{eq:closure}).}
\label{fig:Nscaling}
\end{figure*}

\subsection{Parity of the minimal sector for \(N\le8\)}
\label{sec:parityN}

Figure~\ref{fig:parity} extends the sector analysis to \(N=8\).  The
feeding rate of the minimal sector vanishes identically for even \(N\)
(to \(10^{-17}\)) and is \(0.06\)--\(0.16\,A\) for odd \(N\), nearly
independent of \(N\), while the maximal-\(J\) rate grows with \(N\); the
trapped population also alternates, \(42\%,9\%,22\%,5\%,13\%,3\%\) for
\(N=3,\ldots,8\) at \(s=0.3\), each odd-\(N\) value four to five times the
following even-\(N\) value (the multiplicities of \(J_{\min}\) coincide for
\(N=2m+1\) and \(2m+2\), so this is a population effect, not a counting
artifact).  Yet Fig.~\ref{fig:Nscaling} shows no trace of it in
\(\langle\nop\rangle\) or \(g^{(2)}(0)\).  The parity effect is thus not a
global even/odd modulation of the laser output: it is a redistribution of
stationary weight and radiative efficiency among collective-spin sectors,
and ordinary field observables average over sectors --- the minimal one
contributes little either way, being dark for even \(N\) and several times
subradiant for odd \(N\), while the many intermediate-\(J\) sectors
dominate the emission for \(N\ge4\).  The same representation-theoretic
even/odd distinction that produces parity-dependent photon statistics in
the spontaneous-emission model of Ref.~\cite{Auffeves2011} therefore
survives in the pumped laser, but its observable manifestation is
different.

\begin{figure}[tbp]
\includegraphics[width=\linewidth]{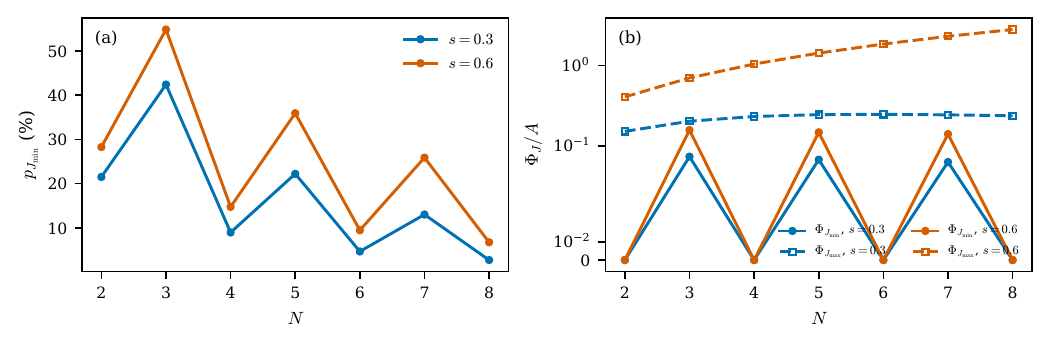}
\caption{Parity of the minimal collective-spin sector, \(N=2,\ldots,8\),
baseline parameters, \(s=0.3\) and \(0.6\): (a) population of the minimal
sector; (b) feeding rates of the minimal (filled circles) and maximal
(open squares) sectors, symmetric-log scale.}
\label{fig:parity}
\end{figure}

\subsection{Second-order cumulant closure}
\label{sec:closure-bench}

The standard approximate treatment of Eq.~(\ref{eq:master}) for several
emitters is the second-order cumulant closure
\cite{Gies2007,Leymann2014,MeiserHolland2010,KirtonKeeling2018}.  The
exact equations of motion for \(\langle\nop\rangle\),
\(p_e=\langle n_j\rangle\), \(Z_1=\langle a^\dagger\tau_-^{(j)}\rangle\), and
\(C=\langle\tau_+^{(i)}\tau_-^{(j)}\rangle\) (\(i\neq j\)) couple to the
third-order moments \(\langle a^\dagger a\,\sigma_z^{(j)}\rangle\) and
\(\langle a^\dagger\sigma_z^{(i)}\tau_-^{(j)}\rangle\), whose exact
equations couple to fourth order, and so on.  The closure sets the
third-order cumulants to zero; since all first moments vanish by
\(U(1)\) symmetry, this is the factorization
\begin{equation}
\langle a^\dagger a\,\sigma_z^{(j)}\rangle\to\langle\nop\rangle\langle\sigma_z\rangle,\qquad
\langle a^\dagger\sigma_z^{(i)}\tau_-^{(j)}\rangle\to\langle\sigma_z\rangle Z_1 ,
\label{eq:factorization}
\end{equation}
i.e.\ the three-operator cumulants \(\kappa(a^\dagger,a,\sigma_z^{(j)})=0\) and
\(\kappa(a^\dagger,\sigma_z^{(i)},\tau_-^{(j)})=0\) (the remaining terms of
the cumulant expansion vanish with the first moments).  The closed stationary
equations are
\begin{align}
0&=-A(\langle\nop\rangle-\nu)-gN\operatorname{Im}Z_1,\notag\\
0&=B(s-p_e)+g\operatorname{Im}Z_1,\notag\\
0&=-(A/2+C_{\rm r}+i\Delta)Z_1\notag\\
&\quad-\tfrac{ig}{2}\big[p_e+(N-1)C+\langle\nop\rangle(2p_e-1)\big],\notag\\
0&=-2C_{\rm r}C-g(2p_e-1)\operatorname{Im}Z_1 ,
\label{eq:closure}
\end{align}
with \(C_{\rm r}\) the transverse rate; we obtain the physical root by
integrating the corresponding equations of motion from the vacuum.  The
hierarchy at this order contains no independent fourth-order field
moment, so \(g^{(2)}(0)\) is not determined without a further assumption;
the zero-mean Wick factorization
\(\langle a^{\dagger2}a^2\rangle=2\langle\nop\rangle^2\) gives
\(g^{(2)}(0)=2\) and misses the finite-\(N\) photon statistics of
Fig.~\ref{fig:Nscaling}b entirely.

The dotted curves in Figs.~\ref{fig:Nscaling}, \ref{fig:n_g2}, and
\ref{fig:goodcavity} compare the closure with the exact solution.  Below
inversion, \(s=0.3\), it deviates from the exact \(\langle\nop\rangle\) by up
to 8\% (above it for \(N\le3\), below it for \(N\ge4\)); at
\(s=0.6\) it overestimates by 9\% (\(N=3\)) to 7\% (\(N=16\)); at \(s=0.9\)
by 13\% (\(N=3\)) to 9\% (\(N=16\)); in the good-cavity regime by 3--18\% at
\(s=0.9\) and by up to 25\% near the lasing crossover.  It is thus
semi-quantitative for the photon number.  For the collective observable
it fails qualitatively: at \(s=0.9\) it predicts a pair coherence of the
wrong sign for all \(N\ge3\), \(C=-0.017\) at \(N=3\) against the exact
\(+0.004\), and it predicts \(C<0\) even where the exact
\(C_2^{(N)}\) is positive up to \(N=16\).  The exact solution identifies the
correlations omitted by the closure: the cumulants set to zero in Eq.~(\ref{eq:factorization}) are of
the same family as \(\Gamma_3\) and \(K_3\), which are nonzero at the
\(10^{-3}\) level here (Fig.~\ref{fig:C3zK3}) and grow with pump, and
Sec.~\ref{sec:maxent} shows that no two-body reconstruction recovers the
three-body structure they encode.

\begin{figure}[tbp]
\includegraphics[width=\linewidth]{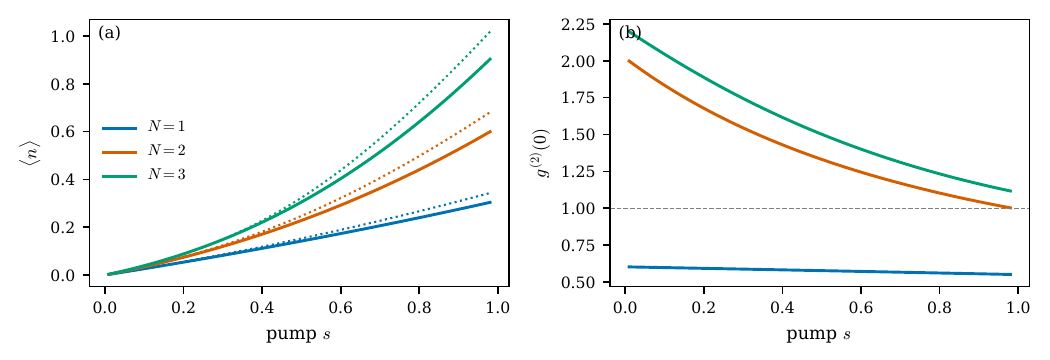}
\caption{(a) Stationary photon number and (b) \(g^{(2)}(0)\) versus pump
for \(N=1,2,3\) at the baseline parameters.  Dotted in (a): second-order
cumulant closure, Eq.~(\ref{eq:closure}).}
\label{fig:n_g2}
\end{figure}

\section{Emission spectra and few-emitter lasing}
\label{sec:lasing}

\subsection{Bad-cavity regime: collective Rabi doublet}
\label{sec:spectrum-baseline}

Figure~\ref{fig:spectrum} shows the dominant pole of \(g^{(1)}(\tau)\) in
the baseline regime.  The spectrum is a doublet: \(\lambda_1\) has an
imaginary part that at weak pump equals the collective vacuum-Rabi
splitting, growing from \(0.54A\) at \(N=1\) to \(1.05A\) at \(N=4\)
(\(g\sqrt N/2\) would give \(0.55,0.78,0.95,1.10\)), and that collapses
toward zero as the pump saturates the atoms.  The width of each doublet
component at weak pump is \(A/2+C=0.85A\), the sum of the cavity and
polarization amplitude-decay rates, and grows with \(N\) and pump to
\(1.19A\) before dropping where the doublet merges.  There is no line
narrowing anywhere in this regime; the baseline system of
Sec.~\ref{sec:collective} is a strongly coupled few-quanta regime,
\(\langle\nop\rangle<1\), \(g^{(2)}(0)>1\).

\begin{figure}[tbp]
\includegraphics[width=\linewidth]{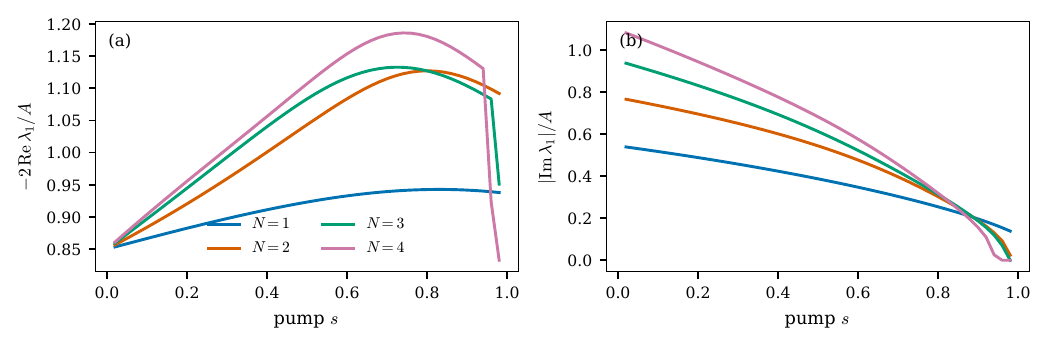}
\caption{Dominant pole of \(g^{(1)}(\tau)\) at the baseline parameters for
\(N=1,\ldots,4\): (a) width \(-2\operatorname{Re}\lambda_1\) of one doublet
component, (b) offset \(|\operatorname{Im}\lambda_1|\) from the cavity
frequency.}
\label{fig:spectrum}
\end{figure}

\subsection{Good-cavity regime: lasing crossover and line narrowing}
\label{sec:goodcavity}

To reach a lasing regime we take \(A=0.1\), \(B=1\), \(C=B/2\), \(g=0.5\)
(all rates in the same units), so that the cavity decays slowly compared
with the atomic polarization and \(g^2/(AC)=5\).  We use ``few-emitter
lasing'' operationally for the regime in which increasing pump produces a
rapid rise of the cavity population, near-Poissonian photon statistics,
and substantial spectral narrowing; at finite \(N\) there is no sharp
threshold, only a threshold-like crossover.  Figure~\ref{fig:goodcavity}
shows, for \(N=1,\ldots,5\), such a crossover above inversion --- to
\(3.1,7.7,12.4,17.2,22.0\) photons at \(s=0.98\), 4.4 per atom --- a
transition of \(g^{(2)}(0)\) from bunching (\(1.6\)--\(2.0\)) to
\(1.008,1.005,1.003\) (\(N=1,2,3\)), and a positive pair coherence above
the crossover (\(C_2^{(2)}=+0.064\), \(C_2^{(5)}=+0.032\)).  The spectrum
evolves from a Rabi doublet of component width \(\approx A/2+C=5.5A\) at
weak pump, through a doublet collapse near \(s\approx0.2\), to a single
line that narrows monotonically: at \(s=0.9\) its width is
\(0.411A,0.249A,0.198A\) for \(N=1,2,3\), and at \(s=0.98\)
\(0.31A,0.19A,0.16A\); in this single-line regime the width of the
spectral function agrees with the pole width to 0.1--0.5\%
(Table~\ref{tab:spectral-checks}).  The scale \(A/(2\langle\nop\rangle)\)
gives \(0.21A,0.08A,0.05A\) at \(s=0.9\); the exact line is 2, 3, and 4
times broader, a discrepancy that quantifies the importance of
finite-gain-medium and pump-induced noise beyond the simple
\(1/\langle\nop\rangle\) estimate.  Table~\ref{tab:goodcavity} collects
these numbers.

\begin{table*}[tbp]
\caption{Good-cavity regime (\(A=0.1\), \(B=1\), \(C=B/2\), \(g=0.5\)) at
\(s=0.9\): exact photon number, \(g^{(2)}(0)\), pair coherence, second-order
closure, and the width of the dominant single-line pole compared with
\(A/(2\langle\nop\rangle)\).  Linewidth entries are omitted where the pole
decomposition is beyond the conditioning limit of
Sec.~\ref{sec:precision}.}
\label{tab:goodcavity}
\begin{ruledtabular}
\begin{tabular}{ccccccc}
\(N\) & \(\langle\nop\rangle\) & \(g^{(2)}(0)\) & \(C_2^{(N)}\) & \(\langle\nop\rangle_{\rm closure}\) & \(-2\operatorname{Re}\lambda_1/A\) & \(1/(2\langle\nop\rangle)\)\\
\hline
1 & 2.4217 & 1.0629 & --- & 2.869 & 0.411 & 0.206\\
2 & 6.0908 & 1.0376 & \(+0.0476\) & 6.707 & 0.249 & 0.082\\
3 & 10.031 & 1.0212 & \(+0.0371\) & 10.656 & 0.198 & 0.050\\
4 & 14.012 & 1.0143 & \(+0.0300\) & 14.633 & --- & 0.036\\
5 & 18.002 & 1.0108 & \(+0.0250\) & 18.619 & --- & 0.028\\
\end{tabular}
\end{ruledtabular}
\end{table*}

\begin{figure*}[tbp]
\includegraphics[width=\textwidth]{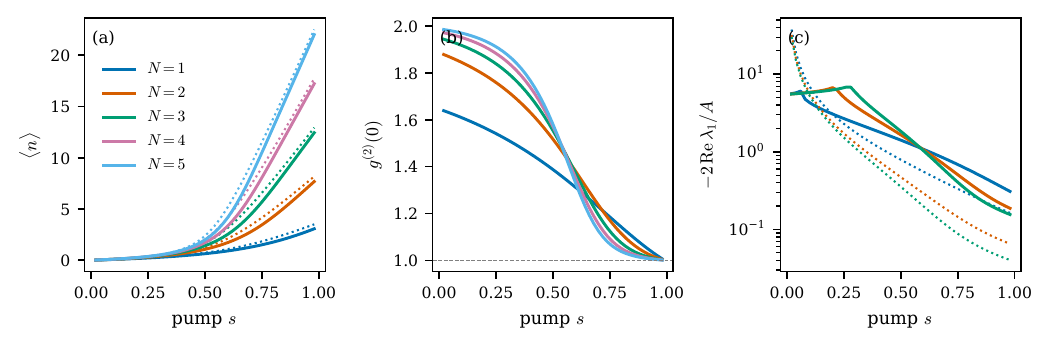}
\caption{Good-cavity regime, \(A=0.1\), \(B=1\), \(C=B/2\), \(g=0.5\),
\(\Delta=0\), \(\nu=0\), for \(N=1,\ldots,5\): (a) photon number (solid:
exact; dotted: second-order closure), (b) \(g^{(2)}(0)\), (c) width of the
dominant pole of \(g^{(1)}(\tau)\) for \(N=1,2,3\) (solid) and the scale
\(A/(2\langle\nop\rangle)\) (dotted).  The kink near \(s\approx0.2\) in (c)
is the collapse of the weak-pump Rabi doublet into a single line.}
\label{fig:goodcavity}
\end{figure*}

This is the setting of Yeoman and Meyer's two-atom laser
\cite{YeomanMeyer1998}, who compared one and two atoms at equal per-atom
rates and attributed the improved two-atom line to the cavity-induced
mutual coherence.  The exact calculation confirms the improvement: adding
the second atom narrows the line by a factor 1.65 and raises the photon
number by 2.5.  Equation~(\ref{eq:coherenceflux}) is a photon-number
balance, not a linewidth identity, so it cannot by itself assign the
narrowing to a mechanism; what it shows is that the regime in which the
line improves is also one in which the pair coherence contributes
positively to cavity feeding --- \(+8\%\) of the population term at
\(s=0.9\), \(+11\%\) at \(s=0.98\), against \(-28\%\) in the bad-cavity
regime of Table~\ref{tab:balance} --- which supports the interpretation
proposed by Yeoman and Meyer.  The narrowing is weaker than the
\(1/\langle\nop\rangle\) scaling would predict (factor 2.5), which is the
quantitative content of ``few-atom'' in few-atom laser.

\section{Discussion and scope}
\label{sec:discussion}

\paragraph{What is new relative to permutation-invariant Fock-space solvers.}
Both use the same \(D_N\)-dimensional atomic space
\cite{ChaseGeremia2008,XuTieriHolland2013,GeggRichter2016,Shammah2018}.
The Fock-space solvers represent the field by a cutoff \(n_{\rm cav}\),
giving a sparse Liouvillian of dimension \(D_Nn_{\rm cav}^2\) whose
nullspace is found by sparse linear algebra; the present method represents
it by the damping basis, giving a block-tridiagonal operator with
\(D_N\times D_N\) blocks, a continued fraction with a provably unique
linear closure, a cost linear in \(n_{\max}\), and analytic access to
observables (Sec.~\ref{sec:observables}) and to weak-pump structure
(Sec.~\ref{sec:hierarchy}).  The Fock solvers have no
\(e^{\langle\nop\rangle}\) conditioning limit and are preferable for
\(\langle\nop\rangle\gtrsim20\).

\paragraph{What is new relative to one-atom damping-basis work.}
Refs.~\cite{BriegelEnglert1993,Ginzel1993} solved the one-atom laser
with a scalar-block recurrence.  The point of the present construction is
that permutation symmetry keeps the many-atom sector finite while
preserving the nearest-neighbor radial structure, so that the one-atom
continued fraction generalizes with \(D_N\times D_N\) blocks and no new
algebra; the nilpotency of Eq.~(\ref{eq:M1-structure}) is a many-atom
feature invisible at \(N=1\).

\paragraph{Why it matters physically.}
The construction provides an exact benchmark in precisely the finite-\(N\)
regime where cumulant truncations are least controlled and where
few-emitter lasers operate.  Section~\ref{sec:benchmark} shows that the
second-order closure is semi-quantitative for the photon number but
misses the sign of the inter-emitter coherence above inversion, and
Secs.~\ref{sec:collective} and \ref{sec:lasing} show what the closure
cannot see: the singlet-controlled sign of the two-atom coherence, the
sector parity, the pump-order hierarchy, and the exact spectrum.

\paragraph{Fast-atomic-correlation limit.}
When atomic correlations decay much faster than the field and atom--field
factorization is assumed, \(\langle S_+S_-\rangle\to Ns\) and
\(\langle\nop S_z\rangle\to N(2s-1)\langle\nop\rangle\), and
Eq.~(\ref{eq:flux2}) reduces to
\(A(\langle\nop\rangle-\nu)=N\mathcal R[s+(2s-1)\langle\nop\rangle]\), the
photon-number balance of the effective Liouvillian
\(\mathcal L_{\rm f}+N\mathcal Rs\,\mathcal D[a^\dagger]+N\mathcal R(1-s)\mathcal D[a]\),
with \(\mathcal R\to g^2C/[2(C^2+\Delta^2)]\) for \(C\gg A\).  This is
Fleischhauer's \(N\)-times-one-atom scaling \cite{Fleischhauer1994} at
the level of the first moment (for \(A=1\), \(B=20\), \(C=10\), \(s=0.2\),
\(g=0.5\) the exact photon numbers differ from it by
\(8.2\times10^{-4}\), \(1.17\times10^{-3}\), \(1.52\times10^{-3}\) for
\(N=1,2,3\)); it is a consistency check, not a derivation of the complete
effective superoperator.

\paragraph{Limitations.}
The construction requires identical emitters with identical couplings and
symmetric local rates: no inhomogeneous broadening, no emitter-dependent
coupling, and no collective decay channel other than the one mediated by
the cavity (a collective \(\mathcal D[S_-]\) term would preserve the
symmetry and could be added).  Its cost is polynomial but steep,
\(O(n_{\max}D_N^3)\sim O(N^9)\) with dense blocks, and its double-precision
accuracy degrades as \(\epsilon e^{\langle\nop\rangle}\).  The minimal-solution
property of the bounded closure is supported by extensive numerical
evidence but not proved (Remark~\ref{rem:minimal}).  Open problems are the
general proof of Conjecture~\ref{conj:hierarchy}, the weak-pump
asymptotics of \(I_3^{\rm irr}\), and a displaced or squeezed damping
basis that would lift the conditioning limit.

\section{Conclusion}
\label{sec:conclusion}

Combining the field damping basis with the permutation-invariant atomic
basis turns the pumped finite-\(N\) Tavis--Cummings problem into an exact
block-Jacobi problem whose stationary state is a matrix continued
fraction with a linear closure that is uniquely solvable for \(A>0\), and whose
emission spectrum follows from the same recurrence.  The exact solution
gives the two-atom photon balance with the pair coherence as an explicit
term tied to the singlet population, a proved pump-order hierarchy of
excitation cumulants, a parity theorem for the minimal collective-spin
sector, and a quantitative benchmark showing that the second-order
cumulant closure, while adequate for the photon number, misses the sign
of the inter-emitter coherence above inversion.  In a good-cavity regime
the same model shows near-Poissonian statistics and a line narrowing to
\(0.2A\) for three emitters.  The method's conditioning limit places it
in the few-quanta regime where microlasers operate and where exact
benchmarks are most needed.

\section*{Data availability}
The code and numerical data necessary to reproduce the results, including
figure-generation scripts and regression tests, will be openly available in
the Zenodo repository cited in Ref.~\cite{code}. 

\section*{Acknowledgement}

The author thanks Hans Briegel for valuable feedback on an earlier version of this manuscript, which was greatly inspired by his pioneering work with Englert in 1993.

Claude Fable 5 (Anthropic) was used as an assistive tool for literature searches, manuscript editing, code generation, and symbolic and numerical checks, including the preparation of plots. All AI-assisted code and calculations were independently checked, and all references and text were independently reviewed by the author, who takes full responsibility for the manuscript.

\appendix

\section{Field multiplication coefficients}
\label{app:fieldcoeff}

The Briegel--Englert coefficients of Eq.~(\ref{eq:field-action-coeff}),
with \(\rho_{-1}^{(k)}=0\).  For \(k>0\),
\begin{align}
a\rho_n^{(k)}&=(n+k)\rho_n^{(k-1)}+\frac{(n+1)\nu}{1+\nu}\rho_{n+1}^{(k-1)},\label{eq:fa1}\\
a^\dagger\rho_n^{(k)}&=(1+\nu)\rho_{n-1}^{(k+1)}+(1+\nu)\rho_n^{(k+1)},\label{eq:fa2}\\
\rho_n^{(k)}a&=(n+k)\rho_n^{(k-1)}+(n+1)\rho_{n+1}^{(k-1)},\label{eq:fa3}\\
\rho_n^{(k)}a^\dagger&=(1+\nu)\rho_{n-1}^{(k+1)}+\nu\rho_n^{(k+1)};\label{eq:fa4}
\end{align}
for \(k<0\),
\begin{align}
a\rho_n^{(k)}&=(1+\nu)\rho_{n-1}^{(k-1)}+\nu\rho_n^{(k-1)},\label{eq:fb1}\\
a^\dagger\rho_n^{(k)}&=(n+|k|)\rho_n^{(k+1)}+(n+1)\rho_{n+1}^{(k+1)},\label{eq:fb2}\\
\rho_n^{(k)}a&=(1+\nu)\rho_{n-1}^{(k-1)}+(1+\nu)\rho_n^{(k-1)},\label{eq:fb3}\\
\rho_n^{(k)}a^\dagger&=(n+|k|)\rho_n^{(k+1)}+\frac{(n+1)\nu}{1+\nu}\rho_{n+1}^{(k+1)};\label{eq:fb4}
\end{align}
and at \(k=0\),
\begin{align}
a\rho_n^{(0)}&=(1+\nu)\rho_{n-1}^{(-1)}+\nu\rho_n^{(-1)},\label{eq:fc1}\\
a^\dagger\rho_n^{(0)}&=(1+\nu)\rho_{n-1}^{(1)}+(1+\nu)\rho_n^{(1)},\label{eq:fc2}\\
\rho_n^{(0)}a&=(1+\nu)\rho_{n-1}^{(-1)}+(1+\nu)\rho_n^{(-1)},\label{eq:fc3}\\
\rho_n^{(0)}a^\dagger&=(1+\nu)\rho_{n-1}^{(1)}+\nu\rho_n^{(1)}.\label{eq:fc4}
\end{align}
The normalization is fixed by these identities together with
\(\rho_0^{(0)}=\) the thermal state of mean \(\nu\); they imply
\(\rho_n^{(k+1)}=[a^\dagger,\rho_n^{(k)}]\) for \(k\ge0\),
\(\rho_n^{(k-1)}=-[a,\rho_n^{(k)}]\) for \(k\le0\), and
\(\rho_{n+1}^{(0)}=-\tfrac{1+\nu}{n+1}[a^\dagger,[a,\rho_n^{(0)}]]\).
Constructing the \(\rho_n^{(k)}\) from these generating relations in a
60-level Fock space, we have verified that all twelve identities, the
eigenvalue equation (\ref{eq:fieldeigen}), and
\(\operatorname{tr}\rho_n^{(k)}=\delta_{n0}\delta_{k0}\) hold to \(10^{-14}\)
at \(\nu=0\) and \(\nu=0.3\).

\section{Observable functionals}
\label{app:observables}

Let \(c_n\) be the component of \(X_n\) along \((N,0,0,0)\).  Only this
component has a nonvanishing atomic trace, so it alone determines the
\(k=0\) expansion of the reduced field state, and
Eqs.~(\ref{eq:fc1})--(\ref{eq:fc4}) give
\(\operatorname{tr}(\nop\rho_n^{(0)})=\nu\delta_{n0}+(1+\nu)\delta_{n1}\) and
\(\operatorname{tr}(a^{\dagger2}a^2\rho_n^{(0)})=2\nu^2\delta_{n0}+4\nu(1+\nu)\delta_{n1}+2(1+\nu)^2\delta_{n2}\),
hence
\begin{align}
\langle\nop\rangle&=\nu+(1+\nu)c_1,\notag\\
\langle\nop(\nop-1)\rangle&=2\nu^2+4\nu(1+\nu)c_1+2(1+\nu)^2c_2,
\label{eq:obs-n}
\end{align}
and more generally \(\langle a^{\dagger m}a^m\rangle=m!\,c_m\) at \(\nu=0\).
For the pair coherence, only \(\mv=(N-2,0,1,1)\) contributes to
\(\langle\tau_+^{(1)}\tau_-^{(2)}\rangle\), and of its \(N(N-1)\)
arrangements exactly one places \(\tau_-\) on atom 1 and \(\tau_+\) on
atom 2, so \(C_2^{(N)}=X_0[(N-2,0,1,1)]/[N(N-1)]\).  For the joint
functional Eq.~(\ref{eq:joint-observable}), Eq.~(\ref{eq:fb4}) gives
\(\rho_n^{(-1)}a^\dagger=(n+1)\rho_n^{(0)}+\tfrac{(n+1)\nu}{1+\nu}\rho_{n+1}^{(0)}\),
whose trace is \(\delta_{n0}\).  For the spectrum
(Sec.~\ref{sec:spectrum}) the initial operator \(a\rho_{\rm ss}\) has
\(K=-1\) coefficients \(Y\) given by
\((Y_{n+\delta})_\alpha\mathrel{+}=f^{\rm L}_{-,\delta}(n,-q_\alpha)(X_n)_\alpha\),
and \(\operatorname{tr}(a^\dagger\Phi^{(-1)}_{n\alpha})=\delta_{n0}\delta_{\alpha,(N,0,0,0)}\).

\section{Validation}
\label{app:validation}

All checks are part of the test suite of Ref.~\cite{code}, which also
verifies Eqs.~(\ref{eq:SLplus})--(\ref{eq:SRminus}) coefficient by
coefficient against explicit tensor-product operators for \(N\le5\),
Eq.~(\ref{eq:M1-structure}) and the nilpotency of \(\mathcal N_K\) for
\(N\le6\), and Eqs.~(\ref{eq:rhoN-reconstruction}) and
(\ref{eq:joint-observable}) against brute-force partial traces.

\subsection{Closed-system checks}

With dissipation off: (i) in the \(J=3/2\) sector of the \(N=3\)
Tavis--Cummings Hamiltonian \cite{TavisCummings1968,JaynesCummings1963},
the coupling matrix elements within excitation manifold \(m\) are
\(g_{\rm BE}\sqrt{3m}\), \(2g_{\rm BE}\sqrt{m-1}\), \(g_{\rm BE}\sqrt{3(m-2)}\)
with \(g_{\rm BE}=g/2\); the Hamiltonian built from the brute-force
operators reproduces them for \(m\le4\) to \(2\times10^{-16}\).  (ii) The
dissipation-free \(K=0\) block-Jacobi matrix assembled from \(M_n,G_n,F_n\)
alone is diagonalized and its eigenvalues compared with the Bohr
frequencies of the Tavis--Cummings spectrum.  Since \(R_{\mv}\) is
\(S_N\)-invariant, only frequencies between states in the same \(J\)
sector appear (Schur's lemma); for \(N=2\) the singlet--triplet frequency
is correctly absent.  For \(N=1\) the Jaynes--Cummings spectrum
\(g_{\rm BE}\sqrt m\) is reproduced exactly; for \(N=2,3\) all within-\(J\)
frequencies of manifolds \(m\le4\) (9 and 17) are reproduced to
\(5\times10^{-11}\), independent of \(s\) and of \(n_{\max}\ge8\).

\subsection{Weak-pump checks}

For \(N=3\) at the baseline parameters, second-order perturbation theory
in \(s\) on the full Liouvillian in a truncated Fock space,
\(\mathcal L_0\rho^{(1)}=-\mathcal L_1\rho^{(0)}\),
\(\mathcal L_0\rho^{(2)}=-\mathcal L_1\rho^{(1)}\) under the trace
constraint, gives
\(\lim\langle\nop\rangle/s=\operatorname{tr}(\nop\rho^{(1)})=0.34519766336\)
and \(\lim g^{(2)}(0)=\operatorname{tr}(a^{\dagger2}a^2\rho^{(2)})/[\operatorname{tr}(\nop\rho^{(1)})]^2=2.21448946864\),
independently of Secs.~\ref{sec:model}--\ref{sec:cf}.  The continued
fraction at \(s=3\times10^{-5}\) gives \(0.34521131203\) and \(2.2144369\),
relative deviations \(4.0\times10^{-5}\) and \(2.4\times10^{-5}\), the
expected \(O(s)\) finite-pump corrections; brute-force diagonalization at
the same \(s\) gives \(0.34521131201\) and \(2.21443\), ten and five
digits of agreement respectively (\(g^{(2)}(0)\) being a ratio of two
\(O(s^2)\) quantities).

\subsection{Closure grid and branch uniqueness}

\begin{table*}[tbp]
\caption{Closure grid for \(N=3\), \(A=1\), \(B=0.7\), \(C=B/2\).  The
photon number is unchanged to all digits when \(M^{(1)},F^{(1)}\) are
extracted at radial index 6, 10, 16, or 24, and agrees with brute-force
diagonalization (cavity cutoff 14, converged against 22) to the relative
accuracy in the last column.  \(\sigma_{\min}\) is the smallest singular
value of \(M^{(1)}\).}
\label{tab:closure}
\begin{ruledtabular}
\begin{tabular}{lccccrc}
Regime & \(g\) & \(\Delta\) & \(s\) & \(\langle\nop\rangle\) & \(\sigma_{\min}\) & rel.\ dev.\\
\hline
Weak coupling & 0.3 & 0 & 0.05 & 0.0058440062 & 0.55 & \(2\times10^{-13}\)\\
Strong coupling & 2.4 & 0 & 0.05 & 0.0216696927 & 0.011 & \(2\times10^{-16}\)\\
Baseline, weak pump & 1.1 & 0 & 0.05 & 0.0184163432 & 0.087 & \(1\times10^{-13}\)\\
Baseline, strong pump & 1.1 & 0 & 0.90 & 0.7808724757 & 0.088 & \(1\times10^{-15}\)\\
Off resonance & 1.1 & 2 & 0.60 & 0.1668688700 & 0.088 & \(1\times10^{-14}\)\\
Thermal, \(\nu=0.1\) & 1.1 & 0 & 0.60 & 0.4917263119 & 0.088 & \(2\times10^{-15}\)\\
\end{tabular}
\end{ruledtabular}
\end{table*}

Table~\ref{tab:closure} shows that the fixed-point iteration
\(R^{(m+1)}=-(M^{(1)}+G^{(1)}R^{(m)})^{-1}F^{(1)}\) converges in one step
from \(R^{(0)}=0\) and from random starting matrices to the same
\(R_\infty\), that the resulting states agree with brute-force
diagonalization to \(10^{-13}\)--\(10^{-16}\), and (Remark~\ref{rem:minimal})
that replacing the terminal condition by \(X_{n_{\max}+1}=0\) changes
\(\langle\nop\rangle\) by less than \(2\times10^{-14}\) at \(N=3\)
(\(g=1.1\) and \(2.4\)), \(N=8\), and in the good-cavity regime.  At a
strong-coupling/high-pump point the low-radial coefficients differ from
an \(n_{\max}=10\) reference by \(6.4\times10^{-5}\), \(1.3\times10^{-7}\),
\(8.9\times10^{-11}\) for \(n_{\max}=4,6,8\).

\subsection{Spectral pole: convergence, selection, and line shape}

\begin{table*}[tbp]
\caption{Spectral checks.  \(\lambda_1\) at the largest radial truncation
(\(n_{\max}=40\) baseline, 80 good cavity); ``spread'' is the maximum change
of \(\lambda_1\) over \(n_{\max}\in\{16,24,30,40\}\) (baseline) or
\(\{40,60,80\}\) (good cavity); the selected pole is identical for residue
thresholds \(\theta=1,5,10,20,30\%\) in every row; \(S(\omega')>0\) on the
whole grid in every row.  The last two columns compare the full width at
half maximum of \(S(\omega')\) with \(-2\operatorname{Re}\lambda_1\).}
\label{tab:spectral-checks}
\begin{ruledtabular}
\begin{tabular}{lcccccc}
Regime & \(N\) & \(s\) & \(\lambda_1\) & spread & FWHM of \(S\) & \(-2\operatorname{Re}\lambda_1\)\\
\hline
Baseline (doublet) & 1 & 0.3 & \(-0.448708365\pm0.457879042i\) & \(4\times10^{-15}\) & 1.297 & 0.897\\
Baseline (doublet) & 2 & 0.3 & \(-0.479710666\pm0.651980285i\) & \(1\times10^{-14}\) & 1.781 & 0.959\\
Baseline (doublet) & 3 & 0.6 & \(-0.558125741\pm0.522869362i\) & \(8\times10^{-14}\) & 1.476 & 1.116\\
Baseline (merged) & 2 & 0.9 & \(-0.557682884\pm0.189185283i\) & \(2\times10^{-14}\) & 0.822 & 1.115\\
Baseline (merged) & 3 & 0.9 & \(-0.551741728\pm0.186902403i\) & \(2\times10^{-13}\) & 0.833 & 1.103\\
Good cavity & 1 & 0.3 & \(-0.117727607\) & \(6\times10^{-16}\) & 0.2278 & 0.2355\\
Good cavity & 2 & 0.3 & \(-0.192961969\) & \(6\times10^{-15}\) & 0.3514 & 0.3859\\
Good cavity & 3 & 0.6 & \(-0.051540779\) & \(7\times10^{-14}\) & 0.10266 & 0.10308\\
Good cavity & 2 & 0.9 & \(-0.012452464\) & \(3\times10^{-14}\) & 0.02488 & 0.02490\\
Good cavity & 3 & 0.9 & \(-0.009905695\) & \(2\times10^{-12}\) & 0.01980 & 0.01981\\
\end{tabular}
\end{ruledtabular}
\end{table*}

Table~\ref{tab:spectral-checks} supports the spectral statements of
Sec.~\ref{sec:spectrum}: the dominant pole is converged in \(n_{\max}\) to
\(10^{-12}\) or better, its selection is independent of the residue
threshold between 1\% and 30\%, the spectral function is positive, and in
the single-line regime its FWHM equals \(-2\operatorname{Re}\lambda_1\) to
0.1--0.5\% at \(s\ge0.6\) (3--9\% at \(s=0.3\), where secondary poles still
carry residue).  In the doublet regime the pole width is the width of one
component and the FWHM of the merged line is larger; when the doublet
has just merged (baseline, \(s=0.9\)) the two overlapping components with
complex residues produce a single peak narrower than either pole.

\section{Maximum-entropy reconstruction}
\label{app:maxent}

\(\rho_3^{(2)}=Z^{-1}\exp(\sum_\alpha\lambda_\alpha F_\alpha)\) is
constrained by nine operators: the three one-body sums
\(S_a=\sum_j\sigma_a^{(j)}\) and the six symmetric two-body sums
\(\sum_{i<j}[\sigma_a^{(i)}\sigma_b^{(j)}+\sigma_b^{(i)}\sigma_a^{(j)}]\),
\(a\le b\), which fix the swap-symmetric pair marginal completely (its
Bloch tensor is symmetric, and the three antisymmetric combinations have
identically zero expectation).  The multipliers are found by Newton
iteration on the strictly convex dual potential
\(\Psi(\lambda)=\ln\operatorname{Tr}e^{\sum\lambda_\alpha F_\alpha}-\sum\lambda_\alpha\operatorname{Tr}(\rho_3F_\alpha)\),
whose Hessian is the Kubo--Mori covariance
\(\sum_{ij}(F_\alpha)_{ij}(F_\beta)_{ji}w_{ij}-\langle F_\alpha\rangle\langle F_\beta\rangle\),
\(w_{ii}=p_i\), \(w_{ij}=(p_i-p_j)/(\ln p_i-\ln p_j)\), in the eigenbasis of
\(\rho_3^{(2)}\); it converges to residuals below \(10^{-12}\) in five to
fifteen steps.

\begin{table}[tbp]
\caption{Irreducible three-body information and connected coherence,
exact versus two-body maximum entropy, at the baseline parameters.}
\label{tab:maxent}
\begin{ruledtabular}
\begin{tabular}{cccc}
\(s\) & \(I_3^{\rm irr}\) (nats) & \(\Gamma_3\) exact & \(\Gamma_3\) max-ent\\
\hline
0.3 & \(1.0757\times10^{-4}\) & \(2.4907\times10^{-3}\) & \(6.2794\times10^{-4}\)\\
0.5 & \(2.8916\times10^{-4}\) & \(4.4786\times10^{-3}\) & \(3.5722\times10^{-4}\)\\
0.7 & \(3.4826\times10^{-4}\) & \(5.0932\times10^{-3}\) & \(4.1042\times10^{-6}\)\\
0.9 & \(2.8217\times10^{-4}\) & \(4.7667\times10^{-3}\) & \(-1.5054\times10^{-5}\)\\
\end{tabular}
\end{ruledtabular}
\end{table}

At small pump \(I_3^{\rm irr}/s^2\) approaches \(5.5\times10^{-4}\) for
\(s\lesssim3\times10^{-3}\) and rises by 50\% by \(s=0.1\); we do not
promote this to an asymptotic law, since \(\rho_3(s{=}0)\) lies on the
boundary of state space and a rigorous entropy expansion would have to
control the resulting nonanalytic terms \cite{Zhou2008}.

\bibliographystyle{apsrev4-2}
\bibliography{refs}

\end{document}